\journal{Journal of Artificial Intelligence}
\DeclarePairedDelimiter{\ceil}{\lceil}{\rceil}
\newtheorem{theorem}{Theorem}
\newtheorem{lemma}[theorem]{Lemma}
\newdefinition{definition}{Definition}
\newproof{proof}{Proof}
\newproof{proof_lg}{Proof of Theorem \ref{thm:carving-equiv-tree}}
\begin{document}
\begin{frontmatter}
\title{Efficient Contraction of Large Tensor Networks for Weighted Model Counting through Graph Decompositions}

\author[1]{Jeffrey M. Dudek\corref{cor1}}
\ead{jmd11@rice.edu}
\author[2]{Leonardo Due{\~n}as-Osorio}
\ead{leonardo.duenas-osorio@rice.edu}
\author[1]{Moshe Y. Vardi}
\ead{vardi@rice.edu}

\address[1]{Department of Computer Science, Rice University, Houston, TX, USA}
\address[2]{Department of Civil and Environmental Engineering, Rice University, Houston, TX, USA}
\cortext[cor1]{Corresponding author}

\begin{abstract}
Constrained counting is a fundamental problem in artificial intelligence. A promising new algebraic approach to constrained counting makes use of tensor networks, following a reduction from constrained counting to the problem of tensor-network contraction. Contracting a tensor network efficiently requires determining an efficient order to contract the tensors inside the network, which is itself a difficult problem.

In this work, we apply graph decompositions to find contraction orders for tensor networks. 
We show that tree decompositions can be used both to find memory-efficient contraction orders and to factor tensor networks with high-rank, structured tensors. We implement these algorithms on top of state-of-the-art heuristic solvers for tree decompositions and show empirically that the resulting weighted model counter is quite effective and useful as part of a portfolio of counters.
\end{abstract}
\begin{keyword}
Weighted Model Counting \sep Tensor Network Contraction \sep Tree Decomposition \sep Carving Decomposition
\end{keyword}
\end{frontmatter}

\newcommand{\domain}[1]{[#1]}				
\newcommand{\dlabel}[1]{\ell(#1)} 			

\newcommand{\restrict}[1]{\big|_{#1}}		

\newcommand{\tdim}[1]{\mathcal{I}(#1)}		
\newcommand{\rank}[1]{rank(#1)}				
\newcommand{\size}[1]{size(#1)}				

\newcommand{\tntensor}[1]{\mathcal{T}(#1)}	
\newcommand{\tnfree}[1]{\mathcal{F}(#1)}	
\newcommand{\tnbound}[1]{\mathcal{B}(#1)}	

\newcommand{\edge}[1]{\{#1\}}				
\newcommand{\vinc}[2]{\delta_{#1}(#2)}		
\newcommand{\vincf}[1]{\delta_{#1}}		
\newcommand{\einc}[2]{\epsilon_{#1}(#2)}	
\newcommand{\eincf}[1]{\epsilon_{#1}}	    
\newcommand{\E}[1]{\mathcal{E}(#1)}			
\newcommand{\V}[1]{\mathcal{V}(#1)}			
\newcommand{\Lv}[1]{\mathcal{L}(#1)}		
\newcommand{\C}[3]{\mathcal{C}_{#1,#2}(#3)}	
\newcommand{\paritions}[1]{\mathcal{P}(#1)}
\newcommand{\Line}[1]{Line(#1)}

\newcommand{\blocks}[1]{\mathcal{B}(#1)}
\newcommand{\Ind}[0]{\mathbf{Ind}}
\newcommand{\pkg}[1]{\texttt{#1}}
\newcommand{\tool}[1]{\texttt{#1}}

\newcommand{\fv}[0]{z}
\newcommand{\copyt}[0]{\text{COPY}}

\newcommand{\support}[1]{\text{sup}(#1)}
\newcommand{\depend}[1]{\text{dep}(#1)}
\newcommand{\shortcite}[1]{\cite{#1}} 
\section{Introduction}
Constrained counting is a fundamental problem in artificial intelligence, with applications in probabilistic reasoning, planning, inexact computing, engineering reliability, and statistical physics \cite{Bacchus2003,DH07,GSS08}. In constrained counting (also called weighted model counting) the task is to count the total weight, subject to a given weight function, of the set of solutions of input constraints. Even when the weight function is a constant function, constrained counting is \#P-Complete \cite{Valiant79}. Nevertheless, the development of tools that can successfully compute the total weight on large industrial formulas is an area of active research \cite{OD15,Thurley2006}. 

Constrained counting can be reduced to the problem of tensor-network contraction \cite{BMT15}. \emph{Tensor networks} are a tool used across quantum physics and computer science for describing and reasoning about quantum systems, big-data processing, and more \cite{BB17,Cichocki14,Orus19}. A tensor network describes a complex tensor as a computation on many simpler tensors, and the problem of tensor-network \emph{contraction} is to perform this computation. Although tensor networks can be seen as a variant of factor graphs \cite{KFL01}, working directly with tensor networks allows us to leverage massive practical work in machine learning and high-performance computing on tensor contraction \cite{BK07,Hirata03,KKCLA17,VZTGDMVAC18} (which also includes GPU support \cite{KSTKPPRS19,NRBHHJN15}) to perform constrained counting. Since tensor networks are relatively unknown in the artificial intelligence community, we give an introduction of relevant material on tensors and tensor networks in Section \ref{sec:tensors}.

Contracting a tensor network requires determining an order to contract the tensors inside the network, and so efficient contraction requires finding a contraction order that minimizes computational cost. Since the number of possible contraction orders grows exponentially in the number of tensors, cost-based exhaustive algorithms, e.g. \cite{PHV14}, cannot scale to handle the large tensor networks required for the reduction from constrained counting. Instead, recent work \cite{KCMR18} gave heuristics that can sometimes find a ``good-enough'' contraction order through structure-based optimization. Finding efficient contraction orders for tensor networks remains an area of active research \cite{RTPCTSL19}.

The primary contribution of this work is the application of heuristic graph-decomposition techniques to find efficient contraction orders for tensor networks. Algorithms based on graph decompositions have been successful across computer science \cite{GLST17,MPPV04}, and their success in practice relies on finding good decompositions of arbitrary graphs. This, along with several recent competitions \cite{DKTW18}, has spurred the development of a variety of heuristics and tools for efficiently finding graph decompositions \cite{AMW17,HS18,Tamaki17}. While we do not establish new parameterized complexity results for model counting (as fixed-parameter algorithms for model counting are well-known for a variety of parameters \cite{FMR08,SS10}), we combine these theoretical results with high-performance tensor network libraries and with existing heuristic graph-decomposition tools to produce a competitive tool for weighted model counting.

We first discuss the \textbf{Line-Graph} method (\textbf{LG}) for finding efficient contraction orders through structure-based graph analysis. First applied to tensor networks by Markov and Shi \shortcite{MS08}, we contribute a new analysis that more closely matches the memory usage of existing tensor libraries. Our analysis combines two theoretical insights: (1) memory-efficient contraction orders are equivalent to low-width carving decompositions (first observed in \cite{de15}), and (2) tree decompositions can be used to find carving decompositions. \textbf{LG} has previously been implemented using exact tools for finding tree decompositions \cite{DFGHSW18}, but its implementation using heuristic tools for tree decompositions is largely unexplored.

Although \textbf{LG} is a general-purpose method for finding contraction orders, \textbf{LG} cannot handle high-rank tensors and so cannot solve many existing counting benchmarks. We therefore contribute a novel structure-based method for finding efficient contraction orders, tailored for constrained counting: the \textbf{Factor-Tree} method (\textbf{FT}). \textbf{FT} factors high-rank, highly-structured tensors as a preprocessing step, leveraging prior work on Hierarchical Tucker representations \cite{Grasedyck10}.

In order to compare our approaches against other model counters (\tool{cachet} \cite{SBK05}, \tool{miniC2D} \cite{OD15}, \tool{d4} \cite{LM17}, \tool{dynQBF} \cite{CW16}, \tool{dynasp} \cite{FHMW17} and \tool{SharpSAT} \cite{Thurley2006}) and other tensor-based methods, we implement \textbf{LG} and \textbf{FT} using three state-of-the-art heuristic tree-decomposition solvers in \tool{TensorOrder}, a new weighted model counter. \textbf{LG} outperforms other model counters and tensor-based methods on a set of unweighted benchmarks, while \textbf{FT} improves the virtual best solver on 21\% of a standard set of weighted model counting benchmarks. \tool{TensorOrder} is thus useful as part of a portfolio of weighted model counters. All code, benchmarks, and detailed data of benchmark runs are available at \url{https://github.com/vardigroup/TensorOrder}.

The rest of the paper is organized as follows: we provide graph notations and define graph decompositions in Section~\ref{sec:prelim}. We introduce tensors and tensor networks and discuss prior work on the optimization of tensor-network contraction in Section~\ref{sec:tensors}. We introduce a framework for solving the problem of weighted model counting with tensor networks in Section~\ref{sec:wmc}. We discuss the \textbf{Line-Graph} method in Section~\ref{sec:contraction-theory} and the \textbf{Factor-Tree} method in Section~\ref{sec:preprocessing}. We present an experimental evaluation of tensor-based approaches to model counting in Section~\ref{sec:experiments}. Finally, we discuss future work and conclude in Section~\ref{sec:conclusion}.
\section{Preliminaries: Graph Notations}
\label{sec:prelim}
A \emph{graph} $G$ has a nonempty set of vertices $\V{G}$, a set of (undirected) edges $\E{G}$, a function $\delta_G: \V{G} \rightarrow 2^{\E{G}}$ that gives the set of edges incident to each vertex, and a function $\epsilon_G: \E{G} \rightarrow 2^{\V{G}}$ that gives the set of vertices incident to each edge. Each edge must be incident to exactly two vertices, but multiple edges can exist between two vertices. An \emph{edge clique cover} of a graph $G$ is a set $A \subseteq 2^{\V{G}}$ such that (1) every vertex $v \in \V{G}$ is an element of some set in $A$, and (2) every element of $A$ is a clique in $G$ (that is, for every $C \in A$ and every pair of distinct $v, w \in C$ there is an edge between $v$ and $w$ in $G$).

A \emph{tree} is a simple, connected, and acyclic graph. A \emph{leaf} of a tree $T$ is a vertex of degree one, and we use $\Lv{T}$ to denote the set of leaves of $T$. A \emph{rooted binary tree} is a tree $T$ where either $T$ consists of a single vertex (called the \emph{root}), or every vertex of $T$ has degree one or three except a single vertex of degree two (called the \emph{root}). If $|\V{T}| > 1$, the \emph{immediate subtrees of $T$} are the two rooted binary trees that are the connected components of $T$ after the root is removed. Throughout this work, we often refer to a vertex of a tree as a \emph{node} and an edge as an \emph{arc} to avoid confusion, since our proofs will frequently work simultaneously with a graph and an associated tree.

In this work, we use two decompositions of a graph as a tree: carving decompositions \cite{ST94} and tree decompositions \cite{RS91}. Both decompose the graph into an \emph{unrooted binary tree}, which is a tree where every vertex has degree one or three. First, we describe carving decompositions \cite{ST94}:
\begin{definition}[Carving Decomposition]
\label{def:carving}
	Let $G$ be a graph. A \emph{carving decomposition} for $G$ is an unrooted binary tree $T$ whose leaves are the vertices of $G$, i.e. $\Lv{T} = \V{G}$. 
	
    For every arc $a$ of $T$, deleting $a$ from $T$ yields exactly two trees, whose leaves define a partition of the vertices of $G$. Let $C_a \subseteq \V{G}$ be an arbitrary element of this partition. The \emph{width} of $T$, denoted $width_c(T)$, is the maximum number of edges in $G$ between $C_a$ and $\V{G} \setminus C_a$ for all $a \in \E{T}$, i.e.,
    
	$$width_c(T) = \max_{a \in \E{T}} \left| \left( \bigcup_{v \in C_a} \vinc{G}{v} \right) \cap \left( \bigcup_{v \in \V{G} \setminus C_a} \vinc{G}{v} \right) \right|.$$
	

	
    The width of a carving decomposition $T$ with no edges is 0.
\end{definition}

The \emph{carving width} of a graph $G$ is the minimum width across all carving decompositions for $G$. Note that an equivalent definition of carving decompositions allows for degree two vertices within the tree as well.

Next, we define tree decompositions \cite{RS91}:
\begin{definition}[Tree Decomposition]
	Let $G$ be a graph. A \emph{tree decomposition} for $G$ is an unrooted binary tree $T$ together with a labeling function $\chi : \V{T} \rightarrow 2^{\V{G}}$ that satisfies the following three properties:
	\begin{enumerate}
		\item Every vertex of $G$ is contained in the label of some node of $T$. That is, $\V{G} = \bigcup_{n \in \V{T}} \chi(n)$.
		\item For every edge $e \in \E{G}$, there is a node $n \in \V{T}$ whose label is a superset of $\einc{G}{e}$, i.e. $\einc{G}{e} \subseteq \chi(n)$.
		\item If $n$ and $o$ are nodes in $T$ and $p$ is a node on the path from $n$ to $o$, then $\chi(n) \cap \chi(o) \subseteq \chi(p)$.
	\end{enumerate}
	The \emph{width} of a tree decomposition, denoted $width_t(T, \chi)$, is the maximum size (minus 1) of the label of every node, i.e.,
	$$width_t(T, \chi) = \max_{n \in \V{T}} | \chi(n) | - 1.$$
\end{definition}

The \emph{treewidth} of a graph $G$ is the minimum width across all tree decompositions for $G$. The treewidth of a tree is $1$. Treewidth is bounded by thrice the carving width \cite{sasak10}. Carving decompositions are the dual of branch decompositions, which are closely related to tree decompositions \cite{RS91}.
\section{An Introduction to Tensors and Tensor Networks}
\label{sec:tensors}
In this section, we introduce tensors and tensor networks and discuss prior work on the optimization of tensor-network contraction. To aid in exposition, along the way we build an analogy between the language of databases \cite{SG88}, the language of factor graphs \cite{KFL01,dechter99}, and the language of tensors: see Table \ref{table:db-tensor-analogy}.

\begin{table}[t]
\centering
\begin{tabular}{c|c|c}
\hline
\textbf{Database Concept} & \textbf{Factor Graph Concept} & \textbf{Tensor Concept}\\ \hline
Attribute & Variable & Index\\
Table & Factor & Tensor\\
Project-Join Query & Factor Graph & Tensor Network\\
Join Tree & Elimination Order & Contraction Tree\\ \hline
\end{tabular}
\caption{\label{table:db-tensor-analogy} An analogy between the language of databases, the language of factor graphs, and the language of tensors.}
\end{table}

\subsection{Tensors}
\emph{Tensors} are a generalization of vectors and matrices to higher dimensions-- a tensor with $r$ dimensions is a table of values each labeled by $r$ indices. An index is analogous to a variable in constraint satisfaction or an attribute in database theory. 

Fix a set $\Ind$ and define an \emph{index} to be an element of $\Ind$. For each index $i$ fix a finite set $\domain{i}$ called the \emph{domain} of $i$. An \emph{assignment} to a set of indices $I \subseteq \Ind$ is a function $\tau$ that maps each index $i \in I$ to an element of $\domain{i}$. Let $\domain{I}$ denote the set of assignments to $I$, i.e., $$\domain{I} = \{\tau: I \rightarrow \bigcup_{i \in I} \domain{i}~\text{s.t.}~\tau(i) \in \domain{i}~\text{for all}~i \in I\}.$$


We now define tensors as multidimensional arrays of values, indexed by assignments to a set of indices:\footnote{In some works, a tensor is defined as a multilinear map and Definition \ref{def:tensor} would be its representation in a fixed basis.}
\begin{definition}[Tensor] \label{def:tensor}
	A \emph{tensor} $A$ over a finite set of indices (denoted $\tdim{A}$) is a function $A: \domain{\tdim{A}} \rightarrow \mathbb{C}$ (where $\mathbb{C}$ is the set of complex numbers).
\end{definition}

The \emph{rank} of a tensor $A$ is the cardinality of $\tdim{A}$. The memory to store a tensor (in a dense way) is exponential in the rank. For example, a scalar is a rank 0 tensor, a vector is a rank 1 tensor, and a matrix is a rank 2 tensor. An example of a higher-rank tensor is the \emph{copy tensor} on a set of indices $I$, which is the tensor $\copyt_I: \domain{I} \rightarrow \mathbb{C}$ such that, for all $\tau \in \domain{I}$, $\copyt_I(\tau) \equiv 1$ if $\tau$ is a constant function on $I$ and $\copyt_I(\tau) \equiv 0$ otherwise \cite{BCJ11}.

It is common to consider sets of tensors closed under contraction (see Section \ref{sec:tensors:tensor-networks}), e.g. tensors with entries in $\mathbb{R}$ as in Section \ref{sec:wmc}. Database tables under bag-semantics \cite{CV93}, i.e., multirelations, are tensors with entries in $\mathbb{N}$. Probabilistic database tables \cite{CP87} are tensors with entries in $[0, 1]$ that sum to 1.

Many tools exist (e.g. \pkg{numpy} \cite{numpy}) to efficiently manipulate tensors. In Section \ref{sec:experiments}, we use these tools to implement tensor-network contraction, defined next.



\subsection{Tensor Networks}
\label{sec:tensors:tensor-networks}
A \emph{tensor network} defines a complex tensor by combining a set of simpler tensors in a principled way. This is analogous to how a database query defines a resulting table in terms of a computation across many tables.

\begin{definition}[Tensor Network]
	\label{def:tensor-contraction-network}
	A \emph{tensor network} $N$ is a nonempty set of tensors across which no index appears more than twice.
\end{definition}

\emph{Free indices} of $N$ are indices that appear once, while \emph{bond indices} of $N$ are indices that appear twice. We denote the set of free indices of $N$ by $\tnfree{N}$ and the set of bond indices of $N$ by $\tnbound{N}$. The \emph{bond dimension} of $N$ is the maximum size of $\domain{i}$ for all bond indices $i$ of $N$.

The problem of \emph{tensor-network contraction}, given an input tensor network $N$, is to compute the \emph{contraction} of $N$ by marginalizing all bond indices:
\begin{definition}[Tensor Network Contraction]
The \emph{contraction} of a tensor network $N$ is a tensor $\tntensor{N}$ with indices $\tnfree{N}$ (the set of free indices of $N$), i.e. a function $\tntensor{N} : \domain{\tnfree{N}} \rightarrow \mathbb{C}$, that is defined for all $\tau \in \domain{\tnfree{N}}$ by
		\begin{equation}
        \label{eqn:contraction} 
        \tntensor{N}(\tau) \equiv \sum_{\rho \in \domain{\tnbound{N}}} \prod_{A \in N} A((\rho \cup \tau)\restrict{\tdim{A}}).
        \end{equation}
\end{definition}

For example, the contraction of the tensor network $\{\copyt_I, \copyt_J\}$ is the tensor $\copyt_{I \oplus J}$ (where $I \oplus J$ is the symmetric difference of $I$ and $J$). Notice that if a tensor network has no free indices then its contraction is a rank 0 tensor. We write $A \cdot B$ to mean the contraction of the tensor network containing the two tensors $A$ and $B$. 

Following our analogy, given a tensor network containing database tables (under bag-semantics) as tensors, its contraction is the join of those tables followed by the projection of all shared attributes. Thus a tensor network is analogous to a project-join query. A tensor network can also be seen as a variant of a factor graph \cite{KFL01} with the additional practical restriction that no variable appears more than twice. The contraction of a tensor network corresponds to the marginalization of a factor graph \cite{RS17} and can similarly be seen as a special case of the FAQ problem \cite{KNR16}. The restriction on the appearance of variables is heavily exploited in tools for tensor-network contraction, since it allows tensor contraction to be implemented as matrix multiplication and thus leverage significant work in high-performance computing on matrix multiplication, both on the CPU \cite{LHKK77} and the GPU \cite{FSH04}.

We focus in this work on tensor networks with relatively few (or no) free indices and hundreds or thousands of bond indices. Such tensor networks are obtained in a variety of applications \cite{Cichocki14,DLVR18}, including the reduction from model counting to tensor network contraction \cite{BMT15}. Although the rank of the contraction $\tntensor{N}$ is small in this case, computing entries by
directly following Equation \ref{eqn:contraction} requires performing a summation over an exponential number of terms--- one for each assignment in $\domain{\tnbound{N}}$--- and is therefore infeasible.

$\tntensor{N}$ can instead be computed by recursively decomposing the tensor network, as in Algorithm \ref{alg:network-contraction} \cite{EP14}. The choice of rooted binary tree $T$ does not affect the output of Algorithm \ref{alg:network-contraction} but may have a dramatic impact on the running-time and memory usage. We explore this in more detail in the following section.

\begin{algorithm}[t]
	\caption{Recursively contracting a tensor network}\label{alg:network-contraction}
	\hspace*{\algorithmicindent} \textbf{Input:} A tensor network $N$ and a rooted binary tree $T$ whose leaves are the tensors of $N$, i.e. $\Lv{T} = N$. \\
	\hspace*{\algorithmicindent} \textbf{Output:} $\tntensor{N}$, the contraction of $N$.
	\begin{algorithmic}[1]
	    \Procedure{Contract}{$N,T$}
		\If {$\left|N\right| = 1$}
		\State \Return the tensor contained in $N$
		\Else
        \State $T_1, T_2 \gets \text{immediate subtrees of}~T$
		\State $A_1 \gets \Call{Contract}{\Lv{T_1}, T_1}$
		\State $A_2 \gets \Call{Contract}{\Lv{T_2}, T_2}$
		\State \Return $A_1 \cdot A_2$
		\EndIf
		\EndProcedure
	\end{algorithmic}
\end{algorithm}

\subsection{Contracting Tensor Networks}
The rooted binary trees used by Algorithm \ref{alg:network-contraction} are called contraction trees:
\begin{definition}[Contraction Tree \cite{EP14}] \label{def:contraction-tree}
	Let $N$ be a tensor network. A \emph{contraction tree} for $N$ is a rooted binary tree $T$ whose leaves are the tensors of $N$. 
\end{definition}

In our analogy, a contraction tree for a tensor network representing a project-join query is a join tree of that query (with projections done as early as possible). 

The problem of \emph{tensor-network-contraction optimization}, which we tackle in this paper, is given a tensor network $N$ to find a contraction tree that minimizes the computational cost of Algorithm \ref{alg:network-contraction}. 
Several \emph{cost-based} approaches aim for minimizing the total number of floating point operations to perform Algorithm \ref{alg:network-contraction} in step 8, e.g. \cite{PHV14} and the \pkg{einsum} package in \pkg{numpy}.  
In this work, we instead focus on \emph{structure-based} approaches to tensor-network-contraction optimization, which analyze the rank of intermediate tensors that appear during Algorithm \ref{alg:network-contraction}. These ranks indicate the amount of memory and computation required at each recursive stage. Moreover, these ranks are more amenable to analysis.

One line of work \cite{MS08,DFGHSW18} uses graph decompositions to analyze the \emph{contraction complexity} of a contraction tree: the maximum over all recursive calls of the sum (minus 1) of the rank of the two tensors contracted in step 8 of Algorithm \ref{alg:network-contraction}. Contraction complexity measures the memory required when step 8 is computed by summing over each shared index sequentially. However, modern tensor packages (e.g. \pkg{numpy}) instead sum over all shared indices simultaneously, which requires the same number of floating point operations but often requires significantly less intermediate memory. Thus contraction complexity overestimates the memory requirements of many contraction trees. 

Instead, another line of structure-based optimization analyzes the maximum rank over all recursive calls of the result of step 8 (and step 3). We call this the \emph{max rank} of a contraction tree. Max rank measures the memory required when step 8 is computed by summing over all shared indices simultaneously. Thus max rank estimates the memory usage of modern tensor packages. Recent work \cite{KCMR18} introduced three methods for heuristically minimizing the max rank: 
a greedy approach (called \textbf{greedy}), an approach using graph partitioning (called \textbf{metis}), and an approach using community-structure detection (called \textbf{GN}).

In this work, we improve on these methods by using graph decompositions to find contraction trees with small max rank.

\section{From Weighted Model Counting to Tensor Networks}
\label{sec:wmc}
In this section, we introduce a framework for solving the problem of weighted model counting with tensor networks. The task in weighted model counting is to count the total weight, subject to a given (literal) weight function, of the set of solutions of input constraints. Formally:
\begin{definition}[Weighted Model Count]
  Let $\varphi$ be a formula over Boolean variables $X$ and let $W: X \times \{0,1\} \rightarrow \mathbb{R}$ be a function (called the \emph{weight function}). The \emph{weighted model count} of $\varphi$ w.r.t. $W$ is
  $$W(\varphi) \equiv \sum_{\tau \in \domain{X}} \varphi(\tau) \cdot \prod_{x \in X} W(x, \tau(x)).$$
\end{definition}




Note that $[X]$ is the set of all functions $\tau$ from $X$ to $\{0, 1\}$. Existing reductions from model counting to tensor-network contraction \cite{BMT15,KCMR18} focus on the unweighted case (i.e., when $W$ is constant). Since we are interested in weighted model counting, we prove that the reduction can be easily extended:
\begin{theorem}
\label{thm:wmc-reduction}
Let $\varphi$ be a CNF formula over Boolean variables $X$ and let $W$ be a weight function. One can construct in polynomial time a tensor network $N_\varphi$ such that $\tnfree{N_\varphi} = \emptyset$ and the contraction of $N_\varphi$ is $W(\varphi)$.
\end{theorem}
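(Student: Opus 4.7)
The plan is to build $N_\varphi$ by placing one tensor per variable of $\varphi$ and one tensor per clause, using fresh indices to record each literal occurrence so that every index ends up in exactly one variable tensor and exactly one clause tensor; the contraction will then reduce to a sum over Boolean assignments, with each surviving term picking up exactly the right product of weights.

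First I would handle the variables. For each $x \in X$ let $k_x$ be the number of literal occurrences of $x$ (as either $x$ or $\neg x$) across the clauses of $\varphi$. Introduce $k_x$ fresh indices $i_1^x, \ldots, i_{k_x}^x$, each with domain $\{0,1\}$, and define the variable tensor $V_x$ on these indices so that $V_x(\tau) = W(x,b)$ whenever $\tau$ sends every index to the same $b \in \{0,1\}$, and $V_x(\tau) = 0$ otherwise; if $k_x = 0$, take $V_x$ to be the scalar $W(x,0)+W(x,1)$. Mathematically this is a weighted variant of $\copyt_{\{i_1^x, \ldots, i_{k_x}^x\}}$; to keep the construction literally polynomial in $|\varphi|$ one may realize $V_x$ as a cascade of rank-3 copy tensors with a single rank-1 weight factor $\tilde W_x$ on one dangling index, introducing additional internal bond indices as needed.

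Next I would handle the clauses. For each clause $C$, walk through its literal occurrences in order and, for the $j$-th occurrence of variable $x$ in $\varphi$, assign index $i_j^x$ to the corresponding slot of $C$. Define the clause tensor $T_C$ on the indices assigned to $C$ so that $T_C(\sigma) = 1$ if the assignment read off from $\sigma$ (after negating according to the polarity of each literal) satisfies $C$, and $T_C(\sigma) = 0$ otherwise. Let $N_\varphi = \{V_x : x \in X\} \cup \{T_C : C \text{ a clause of } \varphi\}$. By construction every index $i_j^x$ appears in exactly one $V_x$ and in exactly one $T_C$, so $N_\varphi$ satisfies Definition~\ref{def:tensor-contraction-network} and $\tnfree{N_\varphi} = \emptyset$.

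Correctness then follows by unrolling Equation~\ref{eqn:contraction}: with no free indices the contraction is a single sum over assignments $\rho$ to all bond indices, and the $V_x$ factors force, for each $x$, all of $i_1^x, \ldots, i_{k_x}^x$ to be mapped to a common value $b_x \in \{0,1\}$ on pain of vanishing. The surviving $\rho$'s are in bijection with Boolean assignments $\tau \in \domain{X}$; for each such $\tau$ the $V_x$ factors contribute $\prod_{x \in X} W(x, \tau(x))$ and the $T_C$ factors contribute $\prod_C C(\tau) = \varphi(\tau)$, so the total is $W(\varphi)$. There is no real conceptual obstacle; the only delicate point is the polynomial-size requirement, which forces the high-rank $V_x$ to be represented implicitly via the rank-3 copy cascade rather than stored densely, but this is standard bookkeeping.
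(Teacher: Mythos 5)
Your construction is correct and is essentially the same as the paper's: one weighted copy tensor per variable, one satisfaction-indicator tensor per clause, with a fresh $\{0,1\}$-valued index for each literal occurrence shared between the corresponding variable and clause tensors, and the same case analysis on whether the bond-index assignment is constant on each variable's indices. The only divergence is your suggestion to factor the high-rank variable tensors into a cascade of rank-3 copy tensors; the paper instead keeps the high-rank diagonal tensors in $N_\varphi$ and defers that factoring to the separate \textbf{Factor-Tree} preprocessing of Section~\ref{sec:preprocessing}.
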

\begin{proof}
The key idea is to include in $N_\varphi$ a tensor $A_x$ for each variable $x \in X$ and a tensor $B_C$ for each clause $C \in \varphi$ such that the tensors share an index if and only if the corresponding variable appears in the corresponding clause.

For each $C \in \varphi$, let $\support{C}$ be the set of Boolean variables that appear in $C$. Similarly, for each $x \in X$, let $\depend{x}$ be the set of clauses that contain $x$. Define $I = \{(x, C) : C \in \varphi, x \in \support{C}\}$ to be a set of indices, each with domain $\{0, 1\}$. That is, $I$ has an index for each appearance of each variable in $\varphi$.

For each $x \in X$, let $A_x: [\{x\} \times \depend{x}] \rightarrow \mathbb{R}$ be the tensor defined by $$A_x(\tau) \equiv \begin{cases}W(x, 1)&\text{if}~\tau((x,C)) = 1~\text{for all}~C \in \depend{x}\\W(x, 0)&\text{if}~\tau((x,C)) = 0~\text{for all}~C \in \depend{x}\\0&\text{otherwise.}\end{cases}$$

Next, for each $C \in \varphi$, let $B_C: [\support{C} \times \{C\}] \rightarrow \mathbb{R}$ be the tensor defined by
$$B_C(\tau) \equiv \begin{cases}1& \text{if}~\{x: x \in \support{C}~\text{and}~\tau((x, C)) = 1\}~\text{satisfies}~C\\0&\text{otherwise.}\end{cases}$$

Finally, let $N_\varphi = \{ A_x : x \in X\} \cup \{B_C : C \in \varphi\}$. Each index $(x, C) \in I$ appears exactly twice in $N_\varphi$ (in $A_x$ and $B_C$) and so is a bond index. Thus $N_\varphi$ is a tensor network with $\tnfree{N_\varphi} = \emptyset$. We now compute $\tntensor{N_\varphi}$. By Definition 5,
$$\tntensor{N_\varphi}(\emptyset) = \sum_{\rho \in \domain{I}} \prod_{x \in X} A_x(\rho\restrict{\{x\} \times \depend{x}}) \cdot \prod_{C \in \varphi} B_C(\rho\restrict{\support{C} \times \{C\}}).$$

To compute the term of this sum for each $\rho \in \domain{I}$, we examine if there exists some $\tau_\rho \in \domain{X}$ such that $\rho((x, C)) = \tau_\rho(x)$ for all $(x, C) \in I$. If so, then by construction $A_x(\rho\restrict{\{x\} \times \depend{x}}) = W(x, \tau_\rho(x))$ and $B_C(\rho\restrict{\support{C} \times \{C\}}) = C(\tau_\rho\restrict{\support{C}})$. On the other hand, if no such $\tau_\rho$ exists then there is some $y \in X$ such that $\rho$ is not constant on $\{y\} \times \depend{y}$. Thus by construction $A_y(\rho\restrict{\{y\} \times \depend{y}}) = 0$ and so the term in the sum for $\rho$ is 0. Hence
$$\tntensor{N_\varphi}(\emptyset) = \sum_{\tau \in \domain{X}} \prod_{x \in X} W(x, \tau(x)) \cdot \prod_{C \in \varphi} C(\tau\restrict{\support{C}}) = W(\varphi).$$\hfill$\square$
\end{proof}

\begin{figure}[t]
	\centering
	\begin{tikzpicture}
\begin{scope}[every node/.style={circle,thick,draw}]
    \node (C) at (-1,-1) {$C$};
    \node (x) at (-1,0) {$x$};
    \node (A) at (-1,1) {$A$};
    \node (y) at (0,0) {$y$};
    \node (w) at (0,1) {$w$};
    \node (D) at (1,-1) {$D$};
    \node (z) at (1,0) {$z$};
    \node (B) at (1,1) {$B$};
\end{scope}

\begin{scope}[every node/.style={fill=white,circle},
              every edge/.style={draw=black,very thick}]
    \path [-] (A) edge (w);
    \path [-] (A) edge (x);
    \path [-] (A) edge (y);
    \path [-] (B) edge (w);
    \path [-] (B) edge (y);
    \path [-] (B) edge (z);
    \path [-] (C) edge (x);
    \path [-] (C) edge (y);
    \path [-] (D) edge (y);
    \path [-] (D) edge (z);
\end{scope}
\end{tikzpicture}
	\hspace{1cm}
	\begin{tikzpicture}
    \node (y) at (-1.5,1) {$y$};
    \node (C) at (-1.5,0.33) {$C$};
    \node (x) at (-1.5,-0.33) {$x$};
    \node (A) at (-1.5,-1) {$A$};
    
    \node (w) at (1.5,1) {$w$};
    \node (B) at (1.5,0.33) {$B$};
    \node (D) at (1.5,-0.33) {$D$};
    \node (z) at (1.5,-1) {$z$};
    \node (yC) at (-0.75, 0.67) {};
    \node (xA) at (-0.75, -0.67) {};
    \node (wB) at (0.75, 0.67) {};
    \node (zD) at (0.75, -0.67) {};
    
    \node (yCxA) at (-0.25, 0) {};
    \node (wBzD) at (0.25, 0) {};

\begin{scope}[every node/.style={fill=black,rectangle}]
    \node (root) at (0, 0) {};
\end{scope}
    
\begin{scope}[every node/.style={fill=white,circle},
              every edge/.style={draw=black,very thick}]
    \path [-] (y) edge (yC.center);
    \path [-] (C) edge (yC.center);
    \path [-] (x) edge (xA.center);
    \path [-] (A) edge (xA.center);
    \path [-] (w) edge (wB.center);
    \path [-] (B) edge (wB.center);
    \path [-] (z) edge (zD.center);
    \path [-] (D) edge (zD.center);
    
    \path [-] (yC.center) edge (yCxA.center);
    \path [-] (xA.center) edge (yCxA.center);
    \path [-] (wB.center) edge (wBzD.center);
    \path [-] (zD.center) edge (wBzD.center);
    
    \path [-] (yCxA.center) edge (root);
    \path [-] (wBzD.center) edge (root);
\end{scope}
\end{tikzpicture}
	\caption{\label{fig:wmc-example} The tensor network (left) produced by Theorem \ref{thm:wmc-reduction} on $\varphi = (w \lor x \lor \neg y) \land (w \lor y \lor z) \land (\neg x \lor \neg y) \land (\neg y \lor \neg z)$, consisting of 8 tensors and 10 indices. Vertices in this diagram are tensors, while edges indicate that the tensors share an index. The weight function affects the entries of the tensors for $w$, $x$, $y$, and $z$. This tensor network has a contraction tree (right) of max rank 4, but no contraction trees of smaller max rank.}
\end{figure}
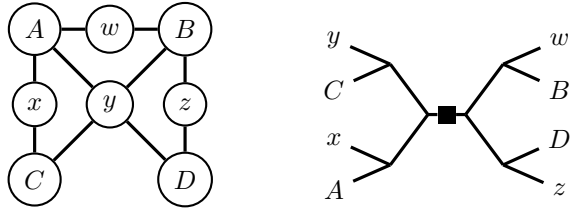

See Figure \ref{fig:wmc-example} for an example of the reduction. This reduction is closely related to the formulation of model counting as the marginalization of a factor graph representing the constraints. Unlike the reduction to factor-graph marginalization, which only assigns factors to clauses, we must also assign a tensor to each variable $x$. For example, if $x$ has weights $W(x, 0) = W(x,1) = 1$ then the tensor assigned to $x$ is a copy tensor. This reduction can also be extended beyond OR clauses to other types of constraints (e.g. parity or cardinality constraints). 

Theorem \ref{thm:wmc-reduction} suggests that the weighted model count of $\varphi$ can be computed by constructing and contracting $N_\varphi$. We present this framework as Algorithm \ref{alg:wmc}. Algorithm \ref{alg:wmc} is a fixed-parameter algorithm for model counting, parameterized by carving-width of the incidence graph. The existence of such algorithms is easily implied by fixed-parameter algorithms for model counting parameterized by treewidth \cite{FMR08,SS10} since treewidth is bounded by thrice the carving width \cite{sasak10}. A variety of methods can be used in Step 2 to find a contraction tree to contract $N_\varphi$, including the methods \textbf{LG} and \textbf{FT} that we discuss in the following sections.

\begin{algorithm}[t]
	\caption{Computing the weighted model count with a TN}\label{alg:wmc}
	\hspace*{\algorithmicindent} \textbf{Input:} A CNF formula $\varphi$ and a weight function $W$\\
	\hspace*{\algorithmicindent} \textbf{Output:} $W(\varphi)$, the weighted model count of $\varphi$ w.r.t. $W$
	\begin{algorithmic}[1]
	    \State $N_\varphi \gets \text{tensor network constructed via Theorem \ref{thm:wmc-reduction}}$
	    \State $T \gets \Call{Find\_Contraction\_Tree}{N_\varphi}$ \Comment{e.g., \textbf{LG} or \textbf{FG}}
	    \State \Return $\Call{Contract}{N_\varphi,~T}$
	\end{algorithmic}
\end{algorithm}


\section{The Line-Graph Method for Finding Contraction Trees}
\label{sec:contraction-theory}
The \textbf{Line-Graph} method for finding contraction trees for a tensor network $N$ applies graph-decomposition techniques to a particular graph constructed from $N$. Prior work \cite{MS08} on tensor networks with no free indices constructed a graph from a tensor network where tensors correspond to vertices and indices shared between tensors correspond to edges. In the context of constraint networks \cite{dechter03}, this is analogous to the dual constraint graph (if multiple edges are drawn between constraints with multiple variables in common).

Although tensor networks constructed from weighted model counting instances do not have free indices, we utilize tensor networks with free indices as part of the preprocessing in Section \ref{sec:preprocessing} and so we need a more general graph construction that can handle free indices. Other works, e.g. \cite{Ying17}, extend the graph construction of \cite{MS08} to tensor networks with free indices by treating free indices as ``half-edges'' (i.e., edges incident to one vertex), but decompositions of such graphs are not well-studied. In order to cleanly extend our decomposition-based analysis to tensor networks with free indices, in this work we instead add an extra vertex incident to all free indices, which we call the \emph{free vertex}. We call the resulting graph the \emph{structure graph} of a tensor network:
\begin{definition}[Structure Graph]\label{def:structure}
	Let $N$ be a tensor network. The \emph{structure graph} of $N$ is the graph $G$ whose 
    vertices are the tensors of $N$ and a fresh vertex $\fv$ (called the \emph{free vertex}) and whose edges are the indices of $N$. Each tensor is incident to its indices, and $\fv$ is incident to all free indices.
    That is, $\V{G} = N \sqcup \{ \fv \}$, $\E{G} = \tnbound{N} \cup \tnfree{N}$, $\vinc{G}{A} = \tdim{A}$ for all $A \in N$, and $\vinc{G}{\fv} = \tnfree{N}$.
\end{definition}

If $N$ has no free indices, the free vertex has no incident edges and the remaining graph is exactly the graph analyzed in prior work. Intuitively, the structure graph of a tensor network $N$ captures how indices are shared by the tensors of $N$. For example, on a CNF formula $\varphi$ Theorem \ref{thm:wmc-reduction} produces a tensor network $N_\varphi$ whose structure graph is exactly the \emph{incidence graph} of $\varphi$. The structure graph of $N$ contains all information needed to compute the max-rank of a contraction-tree of $N$, as formalized in the following lemma.
\begin{lemma} \label{lemma:tcn-equiv-structure}
	Let $N$ be a tensor network with structure graph $G$. If $N' \subseteq N$ is nonempty, then $N'$ is a tensor network and $\tnfree{N'} = \vinc{G}{N'} \cap \vinc{G}{\V{G} \setminus N'}$.
\end{lemma}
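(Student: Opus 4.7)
The plan is to unpack the definitions and verify both claims more or less directly. The only subtlety is that the free vertex $\fv$ plays a double role: it absorbs what used to be free indices of $N$ so that every edge of $G$ has exactly two endpoints, and when we restrict to $N'$ it acts as a ``sink'' for indices that become free in $N'$ because their only tensor-endpoint sits in $N'$.

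First I would dispose of the easy half: $N'$ is a tensor network. Since $N' \subseteq N$ and $N'$ is nonempty, it is a nonempty set of tensors; and if some index appeared three or more times across $N'$, it would also appear that many times across $N$, contradicting the assumption that $N$ is a tensor network. So the whole content of the lemma is the set equality, which I would prove by two containments.

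For the forward containment, take $i \in \tnfree{N'}$. By definition of free index, $i$ appears in exactly one tensor $A \in N'$, so $i \in \tdim{A} = \vinc{G}{A} \subseteq \vinc{G}{N'}$. To show $i \in \vinc{G}{\V{G} \setminus N'}$, I split on how $i$ behaves in the whole network $N$. Since $N$ is a tensor network, $i$ appears in either one or two tensors of $N$. If $i$ appears in only the tensor $A$ in $N$, then $i \in \tnfree{N}$, so by Definition \ref{def:structure} the edge $i$ is incident to the free vertex $\fv$; and $\fv \in \V{G} \setminus N'$. If $i$ appears in two tensors of $N$, then one of them is $A$ and the other, say $B$, must satisfy $B \notin N'$ (otherwise $i$ would appear twice in $N'$, contradicting $i \in \tnfree{N'}$); hence $i \in \vinc{G}{B} \subseteq \vinc{G}{\V{G} \setminus N'}$.

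For the backward containment, take $i \in \vinc{G}{N'} \cap \vinc{G}{\V{G} \setminus N'}$. Pick endpoints $u \in N'$ and $v \in \V{G} \setminus N'$ of the edge $i$ in $G$. Since $u \in N'$ is a tensor, $i \in \tdim{u}$, so $i$ appears in $N'$ at least once. The edge $i$ has exactly two endpoints in $G$ (namely $u$ and $v$), so any tensor containing $i$ must be one of $u,v$; but $v$ is either the free vertex or a tensor outside $N'$, so no tensor of $N'$ other than $u$ contains $i$. Therefore $i$ appears exactly once in $N'$, i.e.\ $i \in \tnfree{N'}$, completing the proof.

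The main obstacle, such as it is, will be bookkeeping rather than any deep argument: one has to keep straight the three sources of ``free-ness''—appearing once in $N$, appearing once in $N'$, and being incident to $\fv$ in $G$—and use Definition \ref{def:structure} to convert between edge-incidences in $G$ and index-membership in tensors.
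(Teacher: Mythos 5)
Your proof is correct and follows essentially the same route as the paper: the paper compresses your two containments into the single characterization that an index is free in $N'$ iff it appears in $N'$ and either is free in $N$ or also appears in $N \setminus N'$, and then translates this into set algebra via Definition \ref{def:structure}. Your element-wise double containment with the case split on whether the second endpoint is the free vertex or a tensor outside $N'$ is just an unpacked version of the same observation.
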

\begin{proof}
	$N'$ is a tensor network since $N$ is. Let $\fv$ be the free vertex of $G$.
	An index $i$ is free in $N'$ if and only if $i$ appears in $N'$ and either $i$ is free in $N$ or $i$ also appears in $N \setminus N'$. Thus
    $$\tnfree{N'} = \bigcup_{A \in N'} \tdim{A} \cap \left( \tnfree{N} \cup \bigcup_{B \in N \setminus N'} \tdim{B} \right).$$
	Since $\tdim{A} = \vinc{G}{A}$ for all $A \in N$ and $\tnfree{N} = \vinc{G}{z}$, we conclude that
	$$\tnfree{N'} = \vinc{G}{N'} \cap (\vinc{G}{\fv} \cup \vinc{G}{N \setminus N'}) =  \vinc{G}{N'} \cap \vinc{G}{\V{G} \setminus N'}.$$\hfill$\square$
\end{proof}


\subsection{Finding Contraction Trees from Carving Decompositions}
Contraction trees are closely connected to decompositions of the structure graph. In particular, contraction trees of a tensor network correspond to carving decompositions of its structure graph, where max rank corresponds exactly to carving width. This correspondence was first proven for tensor networks with no free indices by de Oliveira Oliveira \cite{de15}. Theorem \ref{thm:contraction-equiv-carving} extends this correspondence to tensor networks with free indices as well:
\begin{theorem}
	\label{thm:contraction-equiv-carving}
	Let $N$ be a tensor network with structure graph $G$ and let $w \in \mathbb{N}$. Then $N$ has a contraction tree of max rank $w$ if and only if $G$ has a carving decomposition of width $w$. Moreover, given one of these objects the other can be constructed in $O(|N|)$ time.
\end{theorem}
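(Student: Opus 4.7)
My plan is to build an explicit, linear-time bijection between contraction trees of $N$ and carving decompositions of $G$, and then show that matched objects have equal width/max rank.

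First I would describe the construction in both directions. Given a contraction tree $T$ for $N$, form a carving decomposition $T'$ of $G$ by attaching the free vertex $\fv$ as a new leaf via a fresh arc to the root of $T$. When $|N|\ge 2$ the root has degree $2$, so after adding this arc every internal node of $T'$ has degree $3$, and $\Lv{T'} = N \cup \{\fv\} = \V{G}$, exactly the requirements for a carving decomposition. When $|N|=1$, $T'$ is just a single arc between the sole tensor and $\fv$. In the reverse direction, given $T'$ delete $\fv$ together with its incident arc; the former neighbor of $\fv$ becomes a degree-$2$ vertex that serves as the root of the resulting rooted binary tree $T$, whose leaves are precisely $N$. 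Each conversion touches $O(1)$ data per node, giving the $O(|N|)$ bound.

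Next, the key step is to match each arc $a$ of $T'$ with a recursive call of Algorithm~\ref{alg:network-contraction}. For each arc $a$ choose the side $C_a$ that avoids $\fv$. I would split into three cases: (i) the new arc between the root of $T$ and $\fv$, for which $C_a = N$; (ii) an arc of $T$ incident to a leaf $A \in N$, for which $C_a = \{A\}$; (iii) an arc of $T$ between two internal nodes, for which $C_a = \Lv{T_s}$, the leaves of the subtree $T_s$ rooted at the endpoint of $a$ farther from $\fv$. Applying Lemma~\ref{lemma:tcn-equiv-structure} to each $C_a$ (as a subnetwork of $N$), $|\tnfree{C_a}|$ equals the number of edges of $G$ between $C_a$ and $\V{G}\setminus C_a$, i.e., the contribution of $a$ to $width_c(T')$. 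Case (ii) gives $|\tdim{A}|$, the rank of the tensor returned by step~3; case (iii) gives $|\tnfree{\Lv{T_s}}|$, the rank of the intermediate tensor produced by step~8 in the recursive call on $T_s$; and case (i) gives $|\tnfree{N}|$, which coincides with the rank of the output of the root call (and, when $|N|\ge 2$, with the case (iii) computation at the root). Taking the maximum over all arcs of $T'$ on the carving side and over all recursive calls on the algorithm side yields the same number, so $width_c(T') = w$ iff $T$ has max rank $w$.

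The main obstacle I anticipate is the bookkeeping that ensures every recursive call of Algorithm~\ref{alg:network-contraction} is matched by exactly one arc of $T'$ with the correct partition, with no step omitted or double-counted. The role of the free vertex $\fv$ is precisely to absorb the root call, whose output has rank $|\tnfree{N}|$; without $\fv$ the corresponding partition of $\V{G}$ would have no carving-tree counterpart, and the equivalence would fail whenever $\tnfree{N}$ exceeds the other ranks. I would also check the degenerate cases $|N|=1$ (where $T$ is a single leaf and $T'$ is a single arc between $A$ and $\fv$, both giving value $|\tdim{A}|$) and $\tnfree{N}=\emptyset$ (where $\fv$ is incident only to the arc we attach, contributing $0$ to the max); both should go through cleanly from the lemma.
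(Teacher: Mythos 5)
Your proposal is correct and follows essentially the same route as the paper's proof: attach the free vertex as a leaf at the root (and delete it for the converse), set up the bijection between arcs of the carving decomposition and recursive calls of Algorithm~\ref{alg:network-contraction}, and invoke Lemma~\ref{lemma:tcn-equiv-structure} to equate the cut size at each arc with the rank of the corresponding intermediate tensor. Your explicit three-case split and the degenerate-case checks are just a more detailed rendering of the paper's uniform arc-to-call correspondence.
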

\begin{proof}
Let $\fv$ be the free vertex of $G$. First, let $T$ be a contraction tree of $N$ of max-rank $w$. Construct $T'$ from $T$ by adding $\fv$ as a leaf to the root of $T$. 

$T'$ is a carving decomposition of $G$, since $T'$ is an unrooted binary tree with $\Lv{T'} = \Lv{T} \sqcup \{\fv\} = \V{G}$. 
For each $a \in \E{T'}$, removing $a$ from $T'$ produces two connected components, both trees. Let $T_a$ be the connected component that does not contain $\fv$ and note that $T_a$ is a contraction tree for $\Lv{T_a} \subseteq N$. 

This gives us a bijection between $\E{T'}$ and the set of recursive calls of Algorithm \ref{alg:network-contraction}, where each $a \in \E{T'}$ corresponds to the recursive call where $T_a$ is the input contraction tree and $\tntensor{\Lv{T_a}}$ is the output tensor. Thus
\begin{align*}
&width_c(T') = \max_{a \in \E{T'}} \left| \vinc{G}{\Lv{T_a}} \cap \vinc{G}{\V{G} \setminus \Lv{T_a}} \right| = \max_{a \in \E{T'}} \left| \tnfree{\Lv{T_a}} \right| = w
\end{align*}
where the middle equality is given by applying Lemma \ref{lemma:tcn-equiv-structure} with $N' = \Lv{T_a}$.

Conversely, let $S$ be a carving decomposition of $G$ of width $w$. Construct $S'$ from $S$ by removing the leaf $\fv$ (and its incident arc) from $S$. $S'$ is a contraction tree of $N$, since $S$ is a rooted binary tree (whose root is the node previously attached by an arc to $\fv$) and $\Lv{S'} = \Lv{S} \setminus \{\fv\} = N$. Moreover, applying the construction in the first half of the proof produces $S$ and so the max rank of $S'$ is $width_c(S) = w$. \hfill$\square$
\end{proof}

One corollary of Theorem \ref{thm:contraction-equiv-carving} is that tensor networks with isomorphic structure graphs have contraction trees of equal max rank. This corollary is closely related to Theorem 1 of \cite{EP14}.

Carving decompositions have been studied in several settings. For example, there is an algorithm to find a carving decomposition of minimal width of a planar graph in time cubic in the number of edges \cite{GT08}. It follows that if the structure graph of a tensor network $N$ is planar, one can construct a contraction tree of $N$ of minimal max rank in time $O(|\tnbound{N} \cup \tnfree{N}|^3)$.

There is limited work on the heuristic construction of ``good'' carving decompositions for non-planar graphs. Instead, we leverage the work behind finding tree decompositions to find carving decompositions and subsequently find contraction trees of small max rank.

\subsection{Finding Contraction Trees from Tree Decompositions}


One technique for join-query optimization \cite{DKV02,MPPV04} focuses on analysis of the \emph{join graph}. The \emph{join graph} of a project-join query consists of all attributes of a database as vertices and all tables in the join as cliques. In this approach, tree decompositions for the join graph of a query are used to find optimal join trees. The analogous technique on factor graphs analyzes the \emph{primal graph} of a factor graph, which consists of all variables as vertices and all factors as cliques. Similarly, tree decompositions of the primal graph can be used to find variable elimination orders \cite{KDLD05}. The graph analogous to join graphs and primal graphs for tensor networks is the \emph{line graph} of the structure graph:
\begin{definition}[Line Graph]
	\label{def:line-graph}
	The \emph{line graph} of a graph $G$ is a graph $\Line{G}$ whose vertices are the edges of $G$, and where the number of edges between each $e,f \in \E{G}$ is $|\einc{G}{e} \cap \einc{G}{f}|$, the number of endpoints shared between $e$ and $f$.
\end{definition}

This technique was applied in the context of tensor networks by Markov and Shi \cite{MS08}, who proved that tree decompositions for $\Line{G}$ (where $G$ is the structure graph of a tensor network $N$) can be transformed into contraction trees for $N$ of small contraction complexity. Specifically, tree decompositions of optimal width $w$ yield contraction trees of contraction complexity $w+1$.

In the following theorem we analyze the max rank of the resulting contraction trees, which has not previously been studied. We present this result as a new relationship between carving width and treewidth:
\begin{theorem} \label{thm:carving-equiv-tree}
	Let $G$ be a graph with $\E{G} \neq \emptyset$. Given a tree decomposition $T$ for $\Line{G}$ of width $w \in \mathbb{N}$, one can construct in polynomial time a carving decomposition for $G$ of width no more than $w+1$.
\end{theorem}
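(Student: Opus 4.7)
The plan is to transform the tree decomposition $(T, \chi)$ of $\Line{G}$ into a carving decomposition of $G$ of width at most $w+1$, in three phases: attach, binarize, and analyze. The key fact is that for every $v \in \V{G}$ the set $\vinc{G}{v} \subseteq \E{G} = \V{\Line{G}}$ forms a clique in $\Line{G}$, since any two edges sharing the endpoint $v$ are adjacent in $\Line{G}$. The standard clique-containment corollary of tree decompositions (which follows from property 3) yields a node $n_v \in \V{T}$ with $\vinc{G}{v} \subseteq \chi(n_v)$; isolated vertices of $G$ may be mapped to an arbitrary node. I would attach each $v \in \V{G}$ to $n_v$ as a pendant leaf, producing an intermediate tree $T'$.

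I would then massage $T'$ into an unrooted binary tree whose leaves are exactly $\V{G}$. First, iteratively delete every leaf that does not lie in $\V{G}$ (these are original leaves of $T$ that received no pendant). Then suppress each resulting degree-$2$ node. Finally, whenever an internal node $p$ retains degree $d > 3$, replace it by a small binary tree on $d-2$ fresh internal nodes spanning the $d$ neighbors, and for analysis purposes record the label $\chi(p)$ on each of these fresh nodes. Call the result $T''$; by construction $T''$ is an unrooted binary tree with $\Lv{T''} = \V{G}$, hence a carving decomposition of $G$.

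For the width analysis, I would associate with each arc $a \in \E{T''}$ a label of size at most $w+1$ containing every $G$-edge that crosses $a$. For a pendant arc incident to $v$, I use $\chi(n_v)$: the crossing edges form a subset of $\vinc{G}{v} \subseteq \chi(n_v)$. For an arc between two internal nodes of $T''$, I use the recorded label at either endpoint. Given any crossing edge $e$ with endpoints $u \in C_a$ and $v \in \V{G} \setminus C_a$, both $\chi(n_u)$ and $\chi(n_v)$ contain $e$ by the attachment step; the $T$-path from $n_u$ to $n_v$ lifts to a path in $T''$ that must traverse $a$, so property 3 of the original tree decomposition places $e$ in every label along the way, in particular at an endpoint of $a$.

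The main obstacle I anticipate is the bookkeeping through binarization and suppression: splitting a high-degree node $p$ of $T$ creates arcs of $T''$ that are not arcs of $T$, and I must verify that any $G$-edge crossing such a new arc still lies in the inherited label $\chi(p)$. This follows because the attached nodes $n_u, n_v$ of the crossing edge's endpoints lie in distinct subtrees hanging off $p$, so the unique $T$-path between them passed through $p$ and the edge lay in $\chi(p)$ by property 3. Once this invariant is checked uniformly across arc types, the bound $width_c(T'') \leq w+1$ is immediate, and every phase of the construction is polynomial in $|\V{T}| + |\V{G}| + |\E{G}|$.
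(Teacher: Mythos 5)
Your proposal is correct and follows essentially the same route as the paper: find for each vertex $v$ a node of the tree decomposition of $\Line{G}$ whose label contains the clique $\vinc{G}{v}$, attach $v$ there as a pendant leaf, prune and rebinarize so the leaves are exactly $\V{G}$, and bound the width via property 3 applied to the two attachment nodes of any crossing edge. The only cosmetic difference is that the paper packages the attach/prune/relabel step as a separate lemma on edge clique covers (reused later for its Theorem on factoring), whereas you carry out the same construction and the binarization bookkeeping inline.
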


In order to prove this theorem, it is helpful to first state and prove a lemma on simplifying the internal structure of a tree decomposition. In particular, we show that given an edge clique cover of a graph $G$, it is sufficient to only consider tree decompositions whose leaves are labeled only by elements of the edge clique cover.

\begin{lemma}\label{lemma:tree-simplification}
Let $G$ be a graph, let $(T, \chi)$ be a tree decomposition of $G$, and let $A$ be a finite set. If $f: A \rightarrow 2^{\V{G}}$ is a function whose image is an edge clique cover of $G$, then we can construct in polynomial time a tree decomposition $(S, \psi)$ of $G$ and a bijection $g: A \rightarrow \Lv{S}$ with $width_t(S, \psi) \leq width_t(T, \chi)$ and $\psi \circ g = f$.
\end{lemma}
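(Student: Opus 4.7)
The plan is to build $(S,\psi)$ from $(T,\chi)$ in three stages: graft new leaves corresponding to the elements of $A$, prune away the old leaves of $T$, and then reshape the resulting tree into a binary tree. Throughout, $g(a)$ will be the new leaf created for $a$, and $\psi$ will be the inherited labeling; the equation $\psi\circ g=f$ and the bijectivity of $g$ will be immediate from the construction.

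For the grafting stage, I will invoke the standard ``clique containment'' property of tree decompositions: every clique of $G$ is a subset of some bag $\chi(n)$. Since each $f(a)$ is a clique of $G$ by hypothesis, I can choose $n_a\in\V{T}$ with $f(a)\subseteq\chi(n_a)$. I then attach a fresh leaf $\ell_a$ adjacent to $n_a$ and set $\psi(\ell_a)=f(a)$. Tree decomposition properties (1) and (2) are unaffected because we only added bags, and property (3) is preserved because any intersection between $\chi(\ell_a)$ and another bag is pushed through $n_a$, which contains $f(a)$. The width does not grow, since $|f(a)|\le|\chi(n_a)|$.

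In the pruning stage I successively delete any leaf of the current tree that does not have the form $\ell_a$. The essential observation is that property (1) (vertex coverage) and property (2) (edge containment) remain satisfied: the image of $f$ is an edge clique cover of $G$, so every vertex of $G$ lies in some $f(a)$ and every edge of $G$ lies inside some clique $f(a)$; both of those sets already appear as labels of the leaves $\ell_a$. Property (3) cannot be violated by deleting a leaf, since removing a degree-$1$ node only shortens paths. I iterate until the only leaves remaining are the $\ell_a$'s. This is the step that genuinely uses the edge-clique-cover hypothesis, and it is the main potential obstacle in the argument: without that hypothesis we could lose coverage of some vertex or edge when pruning.

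Finally, I reshape the tree into an unrooted binary tree. Any node of degree $\ge 4$ can be expanded into a caterpillar of copies all carrying the same label, which preserves validity and width; any node of degree $2$ that arose from the pruning can be suppressed by contracting one of its incident arcs into a neighbor (possibly replacing its label by the union with the neighbor's label, which, since one is contained in a previous $\chi$-bag that survived, does not increase the width). The result is $(S,\psi)$ together with the bijection $g:A\to\Lv{S}$ defined by $g(a)=\ell_a$, satisfying $\psi\circ g=f$ and $width_t(S,\psi)\le width_t(T,\chi)$, as required. All steps are polynomial in $|A|$ and $|\V{T}|$.
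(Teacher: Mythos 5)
Your proposal is correct and follows essentially the same route as the paper's proof: locate a bag containing each clique $f(a)$ (the paper derives this containment from the fact that $G \cap f(a)$ is a complete graph of treewidth $|f(a)|-1$, where you cite it as the standard clique-containment property), graft a fresh leaf labeled $f(a)$ there, and then prune all remaining old leaves, with the edge-clique-cover hypothesis guaranteeing that vertex and edge coverage survive the pruning. The only real difference is cosmetic: the paper keeps the tree binary as it goes by subdividing an arc incident to $n_a$ and labeling the new internal node with $\chi(n_a)$, whereas you attach the leaf directly and defer the degree repair to a final reshaping pass (where, for degree-$2$ nodes, simply deleting the node and joining its neighbors suffices and never increases width).
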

\begin{proof}
Consider an arbitrary $a \in A$ and define $\chi_a: \V{T} \rightarrow 2^{\V{G}}$ by $\chi_a(n) \equiv \chi(n) \cap f(a)$ for all $n \in \V{T}$. Notice that $(T, \chi_a)$ is a tree decomposition of $G \cap f(a)$. Since $f$ is an edge clique cover, $G \cap f(a)$ is a complete graph with $|f(a)|$ vertices and thus has treewidth $|f(a)|-1$. It follows that $width_t(T, \chi_a) \geq |f(a)|-1$. That is, there is some $n_a \in \V{T}$ such that $|\chi_a(n_a)| \geq |f(a)|$. It follows that $\chi_a(n_a) = f(a)$ and so $f(a) \subseteq \chi(n_a)$. Choose an arbitrary arc $b \in \einc{T}{n_a}$ and construct $T'$ from $T$ by attaching a new leaf $\ell_a$ at $b$ (and introducing a new internal node). We can extend $\chi$ into a labeling $\chi': \V{T'} \rightarrow 2^{\E{G}}$ by labeling the new internal node with $\chi(n_a)$ and labeling the new leaf node with $f(a)$. Note that $(T', \chi')$ is still a tree decomposition of $G$ of width $width_t(T, \chi)$, that all labels of leaves of $(T, \chi)$ are still labels of leaves of $(T', \chi')$, and that the new leaf of $(T' \chi')$, namely $\ell_a$, is labeled by $f(a)$.

By repeating this process for every $a \in A$, by induction we produce in polynomial time a tree decomposition $(T', \chi')$ of width $width_t(T, \chi)$. Moreover, define the function $g: A \rightarrow \Lv{T'}$ for all $a \in A$ by $g(a) = \ell_a$ (the new leaf attached at each step). By construction, $\chi' \circ g = f$ (since each $\ell_a$ was labeled by $f(a)$) and moreover $f$ is an injection (since a new leaf was introduced at each step). It remains to make $f$ a bijection by removing leaves of $T'$.

Since $f$ is an edge clique cover, properties (1) and (2) of a tree decomposition can be satisfied purely looking at the nodes of $T'$ in the range of $g$. Moreover, removing leaves of $T'$ cannot falsify property (3) of a tree decomposition. Thus we can repeatedly remove leaves of $T'$ not in the range of $g$ until we eventually reach a tree decomposition $(S, \psi)$ for $G$ whose leaves are exactly the range of $g$. After this process, $g$ is a bijection as a function onto $\Lv{S}$ and $\psi \circ f = \delta_G$. Moreover, since $\psi(\V{S}) \subseteq \chi'(\V{T'})$ it follows that $width_t(S, \psi) \leq width_t(T', \chi') = width_t(T, \chi)$ as desired. \hfill$\square$
\end{proof}

We now use this lemma to prove Theorem \ref{thm:carving-equiv-tree}.

\begin{proof_lg}
First, observe that the image of $\vincf{G}: \V{G} \rightarrow \E{G}$ is an edge clique cover of $\Line{G}$. It follows by Lemma \ref{lemma:tree-simplification} that we can construct a tree decomposition $(S, \psi)$ of $\Line{G}$ and a bijection $g: \V{G} \rightarrow \Lv{S}$ such that $\psi \circ g = \vincf{G}$ and $width_t(S, \psi) \leq width_t(T, \chi)$.

Construct $T'$ from $S$ by replacing every leaf $\ell \in \Lv{S}$ with $g^{-1}(\ell)$. Since $g$ is a bijection, $\Lv{T'} = \V{G}$ and so $T'$ is a carving decomposition. In order to compute the carving width of $T'$, consider an arbitrary arc $a \in \E{T'}$ and let $C_a$ be an element of the partition of $\V{G}$ defined by removing $a$. 

For every edge $e \in \vinc{G}{C_a} \cap \vinc{G}{\Lv{T'} \setminus C_a}$, there must be vertices $v \in C_a$ and $w \in \Lv{T'} \setminus C_a$ that are both incident to $e$ and so $e \in \vinc{G}{v} \cap \vinc{G}{w}$. Since $\chi \circ g = \vincf{G}$, it follows that $e \in \chi(g(v)) \cap \chi(g(w))$.  Property 3 of tree decompositions implies that $e$ must also be in the label of every node in the path from $g(v)$ to $g(w)$ in $T$; in particular, $e$ must be in the label of both endpoints of $a$.

Thus every element of $\vinc{G}{C_a} \cap \vinc{G}{\Lv{T'} \setminus C_a}$ must be in the label of both endpoints of $a$. It follows that $|\vinc{G}{C_a} \cap \vinc{G}{\Lv{T'} \setminus C_a}| \leq width_t(T, \chi)+1$. Hence $width_c(T') \leq width_t(T, \chi)+1$ as desired. \hfill$\square$
\end{proof_lg}

An alternative proof of Theorem \ref{thm:carving-equiv-tree} uses Theorem 2.4 of \cite{HW18} to construct a carving decomposition $T'$ of $G$ from $T$ whose vertex congestion is $w+1$. By Lemma 2 of \cite{ACDJPS07}, it follows that the carving width of $T'$ is no more than $w+1$.

Applying Theorem \ref{thm:carving-equiv-tree} when $G$ is the structure graph of a tensor network (together with Theorem \ref{thm:contraction-equiv-carving}) gives us the \textbf{Line-Graph} method, which finds contraction trees by finding tree decompositions of the corresponding line graph. There are several advantages to our new analysis over the analysis of \cite{MS08}: our analysis holds for tensor networks with free indices, and we analyze the max rank of contraction trees instead of the contraction complexity. Although the contraction complexity (and, for factor graphs, the width of the elimination order) is equal to one plus the width of the used tree decomposition, the max rank is smaller on some graphs; see Section \ref{sec:experiments:graph_analysis} in the appendix for an experimental analysis of this.
\section{The Factor-Tree Method for Finding Contraction Trees}
\label{sec:preprocessing}
Approaches to tensor-network contraction that do not modify the input tensor network (e.g., \textbf{LG}) are inherently limited by the ranks of the input tensors. If a tensor network has a rank $r$ tensor, then all contraction trees have max rank of at least $r$. This is a problem for tensor networks with high-rank tensors. 

One example of tensor networks that may contain high-rank tensors are the networks obtained by the reduction from model counting. The tensor network produced from a formula $\varphi$ contains a tensor representing each variable $x$, where the rank of this tensor is the number of appearances of $x$ in $\varphi$ (e.g., the rank $4$ tensor for $y$ in Figure \ref{fig:wmc-example}). For many benchmarks, where a single variable might appear tens or even hundreds of times, this reduction will therefore produce tensor networks containing tensors of infeasibly-high rank. Reductions exist from model counting on arbitrary formulas to model counting on formulas where the number of appearances of each variables is small. However, existing reductions do not consider the carving width of the resulting incidence graph and so often do not significantly improve the max-rank of available contraction trees. 

We introduce here a novel method \textbf{Factor-Tree} that avoids this barrier by preprocessing the input tensor network. Our insight is that a tree decomposition for the incidence graph of $\varphi$ can be used as a guide to introduce new variables in a principled way, so that the resulting tensor network has good contraction trees. In the language of tensors, introducing new variables corresponds to \emph{factoring}: replacing each high-rank tensor $A$ with a tensor network $N_A$ of low-rank tensors that contracts to $A$. The key idea of \textbf{FT}, then, is to use a tree decomposition for the structure graph to factor high-rank tensors.

We state this new result as Theorem \ref{thm:factorable-tree}. Since not all tensors can be factored in the ways that we require for this theorem and for \textbf{FT}, we first characterize the required property: that every tensor is factorable as an arbitrary tree of tensors:



\begin{definition} \label{def:tree-factorable}
A tensor $A$ is \emph{tree factorable} if, for every tree $T$ whose leaves are $\tdim{A}$ (called a \emph{dimension tree} of $A$), there is a tensor network $N_A$ and a bijection $g_A: \V{T} \rightarrow N_A$ s.t.
\begin{enumerate}
\item $A$ is the contraction of $N_A$,
\item $g_A$ is an isomorphism between $T$ and the structure graph of $N_A$ with the free vertex (and incident edges) removed,
\item for every index $i$ of $A$, $i$ is an index of $g_A(i)$, and
\item for some index $i$ of $A$, the bond dimension of $N_A$ is no bigger than $|\domain{i}|$. 
\end{enumerate}
\end{definition}
All tensors in the reduction of Theorem \ref{thm:wmc-reduction} from weighted model counting to tensor networks are tree factorable. A tensor network $N_A$ that satisfies properties 1, 2, and 3 of Definition \ref{def:tree-factorable} for some tree is called a \emph{Hierarchical Tucker representation} of $A$ \cite{Grasedyck10}. Property 4 ensures the result of Theorem \ref{thm:factorable-tree} has small bond dimension.


We now state the main result of this section, which allows us to use a tree decomposition for the structure graph of a tensor network (containing only tree factorable tensors) to factor each tensor in the network and find a contraction tree of low max rank for the resulting network:
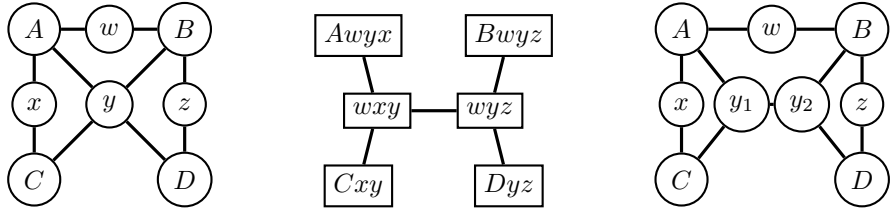
\begin{figure}[t]
	\centering
	\begin{tikzpicture}
\begin{scope}[every node/.style={circle,thick,draw}]
    \node (C) at (-1,-1) {$C$};
    \node (x) at (-1,0) {$x$};
    \node (A) at (-1,1) {$A$};
    \node (y) at (0,0) {$y$};
    \node (w) at (0,1) {$w$};
    \node (D) at (1,-1) {$D$};
    \node (z) at (1,0) {$z$};
    \node (B) at (1,1) {$B$};
\end{scope}

\begin{scope}[every node/.style={fill=white,circle},
              every edge/.style={draw=black,very thick}]
    \path [-] (A) edge (w);
    \path [-] (A) edge (x);
    \path [-] (A) edge (y);
    \path [-] (B) edge (w);
    \path [-] (B) edge (y);
    \path [-] (B) edge (z);
    \path [-] (C) edge (x);
    \path [-] (C) edge (y);
    \path [-] (D) edge (y);
    \path [-] (D) edge (z);
\end{scope}
\end{tikzpicture}
	\hspace{1cm}
	\begin{tikzpicture}
\begin{scope}[every node/.style={rectangle,thick,draw}]
    \node (1) at (-1,1) {$Awyx$};
    \node (2) at (-1,-1) {$Cxy$};
    \node (3) at (-0.75,0) {$wxy$};
    \node (4) at (0.75,0) {$wyz$};
    
    \node (5) at (1,1) {$Bwyz$};
    \node (6) at (1,-1) {$Dyz$};
\end{scope}

\begin{scope}[every node/.style={fill=white,circle},
              every edge/.style={draw=black,very thick}]
    \path [-] (1) edge (3);
    \path [-] (2) edge (3);
    \path [-] (3) edge (4);
    \path [-] (4) edge (5);
    \path [-] (4) edge (6);
\end{scope}
\end{tikzpicture}
	\hspace{1cm}
	\begin{tikzpicture}
\begin{scope}[every node/.style={circle,thick,draw}]
    \node (C) at (-1.2,-1) {$C$};
    \node (x) at (-1.2,0) {$x$};
    \node (A) at (-1.2,1) {$A$};
    \node (y1) at (-0.4,0) {$y_1$};
    \node (y2) at (0.4,0) {$y_2$};
    \node (w) at (0,1) {$w$};
    \node (D) at (1.2,-1) {$D$};
    \node (z) at (1.2,0) {$z$};
    \node (B) at (1.2,1) {$B$};
\end{scope}

\begin{scope}[every node/.style={fill=white,circle},
              every edge/.style={draw=black,very thick}]
    \path [-] (A) edge (w);
    \path [-] (A) edge (x);
    \path [-] (A) edge (y1);
    \path [-] (B) edge (w);
    \path [-] (B) edge (y2);
    \path [-] (B) edge (z);
    \path [-] (y1) edge (y2);
    \path [-] (C) edge (x);
    \path [-] (C) edge (y1);
    \path [-] (D) edge (y2);
    \path [-] (D) edge (z);
\end{scope}
\end{tikzpicture}
	\caption{\label{fig:factor-example} When FT is run on the shown initial tensor network (left) using the shown tree decomposition (middle), \textbf{FT} produces a factored tensor network (right). Tensors of rank 3 or smaller are unchanged, and the tensor for $y$ is factored into two tensors, $y_1$ and $y_2$, each of rank 3. The factored tensor network has a contraction tree of max rank 3 while the initial tensor network only has contraction trees of max rank 4 or higher.}
\end{figure}

\begin{theorem} \label{thm:factorable-tree}
Let $N$ be a tensor network of tree-factorable tensors such that $|\tnfree{N}| \leq 3$ and the structure graph of $N$ has a tree decomposition of width $w \geq 1$.

Then for each $A \in N$ there is a tensor network $N_A$ whose contraction is $A$ that consists only of rank 3 or smaller tensors. Moreover, the disjoint union of these networks, $M = \cup_{A \in N} N_A$, is a tensor network that contracts to $\tntensor{N}$, has the same bond dimension as $N$, and has a contraction tree of max rank no larger than $\ceil{4(w+1)/3}$.
\end{theorem}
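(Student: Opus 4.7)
The plan is to build the factorizations $N_A$ and a contraction tree for $M = \bigcup_{A \in N} N_A$ in tandem, using $(T, \chi)$ as a scaffold. First, I would apply Lemma~\ref{lemma:tree-simplification} with the edge clique cover $\{\{v\} : v \in \V{G}\}$ of $G$ (singletons are trivially cliques and every vertex is covered) to obtain a tree decomposition $(T', \chi')$ of $G$ of width at most $w$ whose leaves are in bijection with $\V{G}$, each labeled by the corresponding singleton. In particular, every tensor $A$ and the free vertex $\fv$ occupy a distinguished leaf of $T'$, while the internal bags still witness coverage and connectivity of $G$.

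For each $A \in N$, I would construct the dimension tree $T_A$ by rooting $T'$ at $A$'s leaf and classifying each index $i \in \tdim{A}$ according to which immediate subtree of the rooted $T'$ contains the other endpoint of the corresponding edge of $G$; within each group I would recursively apply a balanced binary split using the $1/3$--$2/3$ tree-separator theorem. Tree-factorability (Definition~\ref{def:tree-factorable}) then yields a rank-at-most-3 factorization $N_A$ of $A$ whose bond dimension is bounded by that of $N$, via property 4 applied to the index realizing $N$'s bond dimension. The disjoint union $M = \bigcup_A N_A$ is a valid tensor network because every fresh bond index stays internal to a single $N_A$; it inherits the bond dimension of $N$, and associativity of contraction yields $\tntensor{M} = \tntensor{N}$.

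To build the contraction tree, I would process $T'$ bottom-up, at each node $n$ merging all factored pieces sitting in the subtree of $T'$ below $n$ into a single intermediate tensor whose free indices are, by Lemma~\ref{lemma:tcn-equiv-structure}, the edges of $G$ crossing out of the subtree together with the unresolved leaves of partially processed $T_A$'s. By Theorem~\ref{thm:contraction-equiv-carving} this is equivalent to producing a carving decomposition of the structure graph of $M$ whose width equals the max rank. The hypothesis $|\tnfree{N}| \leq 3$ limits the boundary contribution from $\fv$ at the root cut.

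The main obstacle is extracting the precise constant in $\ceil{4(w+1)/3}$. A naive tally of ``bag tensors plus straddling $T_A$ leaves'' only yields an $O(w)$ bound. The factor $4/3$ should emerge by combining the bag-size bound $w+1$ from $(T', \chi')$ with the $1/3$--$2/3$ centroid balance inside each $T_A$: at any cut, either the bag-crossing edges dominate (at most $w+1$), or an unbalanced leaf count from an internal factorization does (at most $\tfrac{2}{3}(w+1)$, since otherwise a balanced centroid split would have isolated it earlier within $T_A$). Reconciling these two effects at every cut, consistently across all tensors in $N$ and the free vertex, is the delicate bookkeeping step.
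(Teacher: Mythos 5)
Your overall strategy (use a tree decomposition of the structure graph as a scaffold, factor each tensor along a subtree of it, and read off a carving decomposition of the factored network) is the right one, but there is a genuine gap at exactly the point you flag: the bound $\ceil{4(w+1)/3}$ is never established, and the mechanism you propose would not produce it. At a generic cut of your bottom-up contraction tree two contributions to the rank coexist and \emph{add}: the edges of $G$ that cross the cut (bounded via a bag intersection by $w+1$), and the fresh bond indices internal to the factorizations $N_A$ of those tensors $A$ whose pieces straddle the cut. Your ``either the bag-crossing edges dominate \dots or an unbalanced leaf count does'' framing treats these as alternatives, which would give roughly $w+1$, not $\ceil{4(w+1)/3}$ (and summing your two quantities gives $\tfrac{5}{3}(w+1)$, overshooting it); the claim that a $1/3$--$2/3$ centroid split inside each $T_A$ caps the second contribution at $\tfrac{2}{3}(w+1)$ is asserted, not proved. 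The paper's constant comes from a different and quite specific device: the simplified tree decomposition's leaves are put in bijection with $\E{G}$ (using the edge clique cover $\eincf{G}$, so each leaf is labelled by the two endpoints of an index), each $T_A$ is the Steiner subtree spanning the leaves of $A$'s indices, and at each internal node $n$ the at most $|\chi(n)| \leq w+1$ factored pieces living at $n$ are partitioned into \emph{three equal groups}, one inserted as a chain of leaves along each of the three incident arcs; a cut then pays $w+1$ for the crossing edges plus at most $\ceil{(w+1)/3}$ for the degree-3 pieces that change sides along that arc. Nothing in your construction plays the role of this three-way partition, so the $4/3$ does not emerge.

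Two secondary problems. First, your appeal to Lemma~\ref{lemma:tree-simplification} with the singleton cover $\{\{v\} : v \in \V{G}\}$ is fragile: the lemma's pruning step relies on the cover witnessing property (2) of tree decompositions (every edge contained in the label of some surviving leaf), which singletons do not provide, so after pruning $(T',\chi')$ need not remain a tree decomposition; the paper instead covers with the edge-endpoint pairs $\eincf{G}$. Second, the free vertex (where the hypothesis $|\tnfree{N}| \leq 3$ is actually used, to argue that contracting its preimage does not raise carving width), rank-0 tensors, the bond-dimension claim via property 4 of Definition~\ref{def:tree-factorable}, and the isomorphism between your bookkeeping graph and the true structure graph of $M$ all need explicit, if routine, treatment.
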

\begin{proof}
The proof proceeds in five steps: (1) compute the factored tensor network $M$, (2) construct a graph $H$ that is a simplified version of the structure graph of $M$, (3) construct a carving decomposition $S$ of $H$, (4) bound the width of $S$, and (5) use $S$ to find a contraction tree for $M$. Working with $H$ instead of directly working with the structure graph of $M$ allows us to cleanly handle tensor networks with free indices.

\textbf{Part 1: Factoring the network.}
Let $G$ be the structure graph of $N$ with all degree 0 vertices removed; $G$ must also have a tree decomposition of width $w$. Moreover, the image of $\eincf{G}: \E{G} \rightarrow 2^{\V{G}}$ is an edge clique cover of $G$. Thus using Lemma \ref{lemma:tree-simplification} we can construct a tree decomposition $(T, \chi)$ of $G$ and a bijection $g: \E{G} \rightarrow \Lv{T}$ such that $\chi \circ g = \eincf{G}$ and $width_t(T, \chi) \leq w$.

Next, for each $v \in \V{G}$, define $T_v$ to be the smallest connected component of $T$ containing $\{g(i) ~:~i \in \vinc{H}{v} \}$. Consider each $A \in N$. If $\tnfree{A} = \emptyset$, let $N_A = \{N_A\}$. Otherwise, observe that $T_A$ is a dimension tree of $A$ and so we can factor $A$ with $T_A$ using Definition \ref{def:tree-factorable} to get a tensor network $N_A$ and a bijection $g_A: \V{T_A} \rightarrow N_A$. Define $M = \cup_{A \in N} N_A$ and let $G'$ be the structure graph of $M$ with free vertex $\fv'$. The remainder of the proof is devoted to bounding the carving width of $G'$.

\textbf{Part 2: Constructing a simplified structure graph of $M$.} In order to easily characterize $G'$, we define a new, closely-related graph $H$ by taking a copy of $T_v$ for each $v \in \V{G}$ and connecting these copies where indicated by $g$. Formally, the vertices of $H$ are $\{(v, n) : v \in \V{G}, n \in \V{T_v}\}$. For every $v \in \V{G}$ and every arc in $T$ with endpoints $n, m \in \V{T_v}$, we add an edge between $(v, n)$ and $(v, m)$. Moreover, for each $e \in \E{G}$ incident to $v, w \in \V{G}$, we add an edge between $(v, g(e))$ and $(w, g(e))$. 

We will prove in Part 5 that the carving width of $G'$ is bounded from above by the carving width of $H$. We therefore focus in Part 3 and Part 4 on bounding the carving width of $H$. It is helpful for this to define the two projections $\pi_G : \V{H} \rightarrow \V{G}$ and $\pi_T : \V{H} \rightarrow \V{T}$ that indicate respectively the first or second component of a vertex of $H$. 



\textbf{Part 3. Constructing a carving decomposition $S$ of $H$.}
The idea of the construction is, for each $n \in \V{T}$, to attach the elements of $\pi_T^{-1}(n)$ as leaves along the arcs incident to $n$. To do this, for every leaf node $\ell \in \Lv{T}$ with incident arc $a \in \vinc{T}{\ell}$ define $H_{\ell, a} = \pi_T^{-1}(\ell)$. For every non-leaf node $n \in \V{T} \setminus \Lv{T}$ partition $\pi_T^{-1}(n)$ into three equally-sized sets $\{H_{n,a} : a \in \vinc{T}{n}\}$. Observe that $\{H_{n,a} : n \in \V{T}, a \in \vinc{T}{n}\}$ is a partition of $\V{H}$. 

We use this to construct a carving decomposition $S$ from $T$ by adding each element of $H_{n,a}$ as a leaf along the arc $a$. Formally, let $x_v$ denote a fresh vertex for each $v \in \V{H}$, let $y_n$ denote a fresh vertex for each $n \in \V{T}$, and let $z_{n,a}$ denote a fresh vertex for each $n \in \V{T}$ and $a \in \vinc{T}{n}$. Define $\V{S}$ to be the union of $\V{H}$ with the set of these free vertices. 

We add an arc between $v$ and $x_v$ for every $v \in \V{H}$. Moreover, for every $a \in \E{T}$ with endpoints $o, p \in \einc{T}{a}$ add an arc between $y_{o,a}$ and $y_{p,a}$. For every $n \in \V{T}$ and incident arc $a \in \vinc{T}{n}$, construct an arbitrary sequence $I_{n,a}$ from $\{x_v : v \in H_{n,a}\}$. If $H_{n,a} = \emptyset$ then add an arc between $y_n$ and $z_{n,a}$. Otherwise, add arcs between $y_n$ and the first element of $I$, between consecutive elements of $I_{n,a}$, and between the last element of $I_{n,a}$ and $z_{n,a}$. 

Finally, remove the previous leaves of $T$ from $S$. The resulting tree $S$ is a carving decomposition of $H$, since we have added all vertices of $H$ as leaves and removed the previous leaves of $T$.

\textbf{Part 4: Computing the width of $S$.} In this part, we separately bound the width of the partition induced by each of the three kinds of arcs in $S$.

First, consider an arc $b$ between some $v \in \V{H}$ and $x_v$. Since all vertices of $H$ are degree 3 or smaller, $b$ defines a partition of width at most $3 \leq \ceil{4(w+1)/3}$.

Next, consider an arc $c_a$ between $y_{o,a}$ and $y_{p,a}$ for some arc $a \in \E{T}$ with endpoints $o, p \in \einc{T}{a}$.
Observe that removing $a$ from $T$ defines a partition $\{B_o, B_p\}$ of $\V{T}$, denoted so that $o \in B_o$ and $p \in B_p$. Then removing $c_a$ from $S$ defines the partition $\{ \pi_T^{-1}(B_o), \pi_T^{-1}(B_p) \}$ of $\Lv{S}$. By construction of $H$, all edges between $\pi_T^{-1}(B_o)$ and $\pi_T^{-1}(B_p)$ are between $\pi_T^{-1}(o)$ and $\pi_T^{-1}(p)$. Since $\pi_G(\pi_T^{-1}(o)) \subseteq \chi(o)$, $\pi_G(\pi_T^{-1}(o)) \subseteq \chi(p)$, and $\pi_G$ is an injection on $\pi_T^{-1}(n)$ for all $n \in \V{T}$), it follows that the partition defined by $c_a$ has width no larger than $|\chi(o) \cap \chi(p)| \leq w+1$. 

Finally, consider an arc $d$ added as one of the sequence of $|H_{n,a}|+1$ arcs between $y_n$, $I_{n,a}$, and $z_{n,a}$ for some $n \in \V{T}$ and $a \in \vinc{T}{n}$. Some elements of $H_{n,a}$ have changed blocks from the partition defined by $c_a$. Each vertex of degree 2 that changes blocks does not affect the width of the partition, but each vertex of degree 3 that changes blocks increases the width by 1. There are at most $|H_{n,a}| \leq \ceil{(w+1)/3}$ elements of degree 3 added as leaves between $y_n$ and $z_{n,a}$. Thus the partition defined by $d$ has width at most $w + 1 + \ceil{(w+1)/3} = \ceil{4(w+1)/3}$.

It follows that the width of $S$ is at most $\ceil{4(w+1)/3}$.

\textbf{Part 5: Bounding the max-rank of $M$.} Let $\fv$ be the free vertex of the structure graph of $N$. We first construct a new graph $H'$ from $H$ by, if $\tnfree{N} \neq \emptyset$, contracting all vertices in $\pi_G^{-1}(\fv)$ to a single vertex $\fv$. If $\tnfree{N} = \emptyset$, instead add $\fv$ as a fresh degree 0 vertex to $H'$. Moreover, for all $A \in N$ with $\tdim{A} = \emptyset$ add $A$ as a degree 0 vertex to $H'$. 

Note that adding degree 0 vertices to a graph does not affect the carving width. Moreover, since $|\tnfree{N}| \leq 3$ all vertices (except at most one) of $\pi_G^{-1}(\fv)$ are degree 2 or smaller. It follows that contracting $\pi_G^{-1}(\fv)$ does not increase the carving width. Thus the carving width of $H'$ is at most $\ceil{4(w+1)/3}$.

Moreover, $H'$ and $G'$ are isomorphic. To prove this, define an isomorphism $\phi: \V{H'} \rightarrow \V{G'}$ between $H'$ and $G'$ by, for all $v \in \V{H'}$:
$$\phi(v) \equiv \begin{cases}v&\text{if}~v \in N~\text{and}~\tdim{v}=\emptyset\\\fv'&v=\fv\\g_{\pi_G(v)}(\pi_T(v))&\text{if}~v \in \V{H}~\text{and}~\pi_G(v) \in N\end{cases}$$
$\phi$ is indeed an isomorphism between $H'$ and $G'$ because the functions $g_A$ are all isomorphisms and because an edge exists between $\pi_G^{-1}(v)$ and $\pi_G^{-1}(w)$ for $v, w \in \V{G}$ if and only if there is an edge between $v$ and $w$ in $G$. Thus the carving width of $G'$ is at most $\ceil{4(w+1)/3}$. By Theorem \ref{thm:contraction-equiv-carving}, then, $M$ has a contraction tree of max rank no larger than $\ceil{4(w+1)/3}$.
\hfill$\square$
\end{proof}

The general idea of using a tree decomposition of a graph to split high-degree nodes was previously used by Markov and Shi \cite{MS11} and in the context of constraint satisfaction by Samer and Szeider \cite{SS10_2}. Both of these works focus on minimizing the treewidth of the factored graph instead of the max-rank. Translated to tensor networks (as done in Lemma 3 of \cite{oliveira18}), their constructions produce a factored network $N'$ with structure graph $G$ that satisfies all requirements of Theorem \ref{thm:factorable-tree} except with a bound of $w+1$ on the treewidth of $G$ in place of the bound on max-rank. Since the treewidth of $\Line{G}$ plus 1 is bounded by the product of the maximum degree of $G$ (namely 3) and the treewidth of $G$ plus 1 \cite{MS08}, we can then use \textbf{LG} to produce a contraction tree for $N'$ of max rank no larger than $3(w+2)$. Theorem \ref{thm:factorable-tree} thus gives an improvement on max-rank over these prior works from $3(w+2)$ to $\ceil{4(w+1)/3}$.




The construction of Theorem \ref{thm:factorable-tree} gives us the \textbf{Factor-Tree} method, which uses tree decompositions of the structure graph to preprocess the tensor network and factor high-rank tensors. See Figure \ref{fig:factor-example} for an example of the preprocessing. We show in Section \ref{sec:experiments:cachet} that \textbf{FT} can significantly improve the quality of the contraction tree on benchmarks with high-rank tensors.
\section{Implementation and Evaluation} \label{sec:experiments}
We implement Algorithm 2 in \tool{TensorOrder}, a new tool for weighted model counting using tensor networks. \tool{TensorOrder} can be configured to perform Step 2 using one of the methods from Kourtis \emph{et al.} \cite{KCMR18} (\textbf{greedy}, \textbf{metis}, and \textbf{GN}) or one of the methods presented in this paper (\textbf{LG} and \textbf{FT}). 

We use \tool{TensorOrder} to compare tensor-based methods with existing state-of-the-art tools for weighted model counting: \tool{cachet} \cite{SBK05}, \tool{miniC2D} \cite{OD15} and \tool{d4} \cite{LM17}. 
We also compare with \tool{dynQBF} \cite{CW16}, \tool{dynasp} \cite{FHMW17} and \tool{SharpSAT} \cite{Thurley2006} when the benchmarks are unweighted. Note \tool{dynQBF} and \tool{dynasp} are solvers from related domains (that can be used as model counters) that also use tree decompositions.


We compare \tool{TensorOrder} on two sets of existing benchmarks. First, in Section \ref{sec:experiments:cubic} we compare on formulas that count the number of vertex covers of randomly-generated cubic graphs \cite{KCMR18}. Second, in Section \ref{sec:experiments:cachet} we compare on formulas whose weighted count is exact inference on Bayesian networks \cite{SBK05}. 

Each experiment was run in a high-performance cluster (Linux kernel 2.6.32) using a single 2.80 GHz core of an Intel Xeon X5660 CPU and 48 GB RAM. Each implementation was run once on each benchmark with a timeout of 1000 seconds. We provide all code, benchmarks, and detailed data of benchmark runs at \url{https://github.com/vardigroup/TensorOrder}.

\subsection{Implementation Details of \tool{TensorOrder}}
\label{sec:experiments:implementation}
\tool{TensorOrder} is implemented in Python 3.6. All tensor contractions are performed using \pkg{numpy} 1.15 and 64-bit double precision floats. \tool{TensorOrder} also supports infinite-precision integer arithmetic, but the performance is significantly degraded by limited \pkg{numpy} support. Note that \pkg{numpy} is able to leverage SIMD parallelism for tensor contraction.

Both \textbf{LG} and \textbf{FT} require first finding a tree decomposition. To do this, we leverage three heuristic tree-decomposition solvers: \pkg{Tamaki} \cite{Tamaki17}, \pkg{FlowCutter} \cite{HS18}, and \pkg{htd} \cite{AMW17}. \tool{TensorOrder} therefore has three implementations of \textbf{LG} (\textbf{LG+Tamaki}, \textbf{LG+Flow}, and \textbf{LG+htd}) and three implementations of \textbf{FT} (\textbf{FT+Tamaki}, \textbf{FT+Flow}, and \textbf{FT+htd}) for different choices of solver.

All the tree-decomposition solvers we consider are online solvers and so each implementation must decide how long to run the solver (this time is included in the measured running time). \tool{TensorOrder} estimates the time to contract each potential contraction tree (using techniques from the \pkg{einsum} package of \pkg{numpy}) and continues to look for better tree decompositions until it expects to have spent more than half of the running time on finding a tree decomposition.  This strikes a balance between improving and using the contraction trees.

\begin{figure}
	\centering
	\input{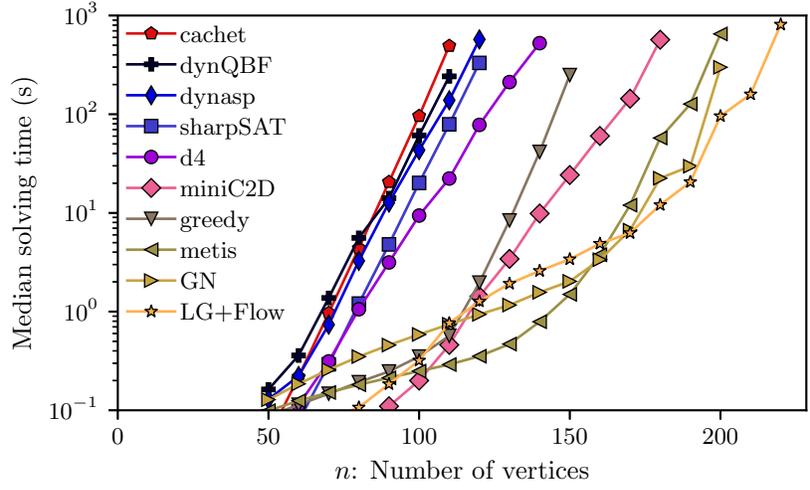}
	\input{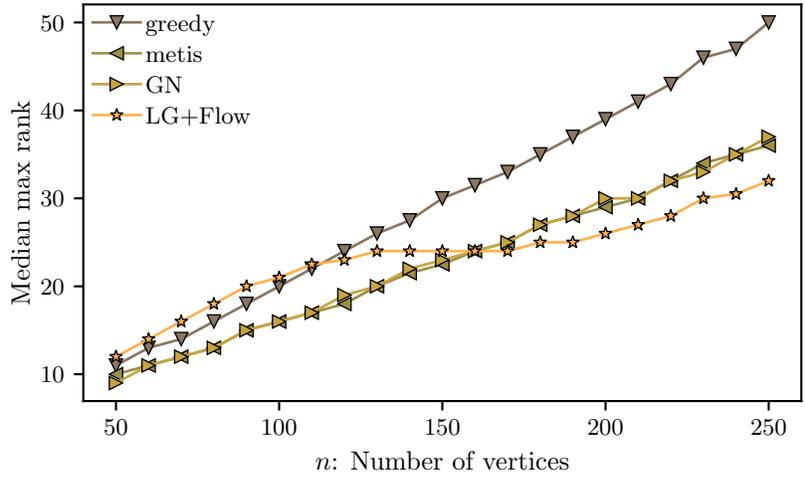}
	\caption{\label{fig:cubic-time} Median solving time (top) and max-rank of the computed contraction tree (bottom) of various model counters and tensor-based methods run on benchmarks counting the number of vertex covers of 100 cubic graphs with $n$ vertices. Solving time of datapoints that ran out of time ($1000$ seconds) or memory (48 GB) are not shown. When $n \geq 170$, our contribution \textbf{LG+Flow} is faster than all other methods and finds contraction trees of lower max-rank than all other tensor-based methods.}
\end{figure}



\subsection{Counting Vertex Covers of Cubic Graphs}
\label{sec:experiments:cubic}

We first compare on benchmarks that count the number of vertex covers of randomly-generated cubic graphs \cite{KCMR18}. In particular, for each number of vertices $n \in \{50, 60, 70, \cdots, 250\}$ we randomly sample 100 connected cubic graphs using a Monte Carlo procedure \cite{VL05}. These benchmarks are monotone 2-CNF formulas in which every variable appears 3 times.

Results on these benchmarks are summarized in Figure \ref{fig:cubic-time}. For ease of presentation, we display only the best-performing of the \textbf{LG} and \textbf{FT} implementations: \textbf{LG+Flow}. We observe that tensor-based methods are fastest when $n \geq 110$. On large graphs our contribution \textbf{LG+Flow} is fastest and able to find the lowest max-rank contraction trees. \textbf{LG+Flow} is the only implementation able to solve at least 50 benchmarks within 1000 seconds when $n$ is $220$.


\begin{figure}[t]
	\centering
	\input{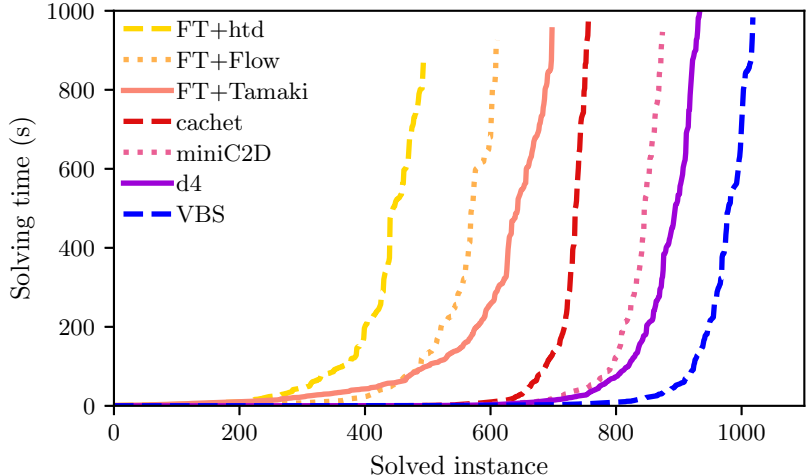}
	\caption{\label{fig:cachet-cactus} A cactus plot of the number of benchmarks solved by various methods out of 1091 probabilistic inference benchmarks. Although our contributions \textbf{FT+*} solve fewer benchmarks than the existing weighted model counters \tool{cachet}, \tool{miniC2D}, and \tool{d4}, they improve the virtual best solver on 231 benchmarks.}
\end{figure}

\subsection{Weighted Model Counting: Exact Inference}
\label{sec:experiments:cachet}
We next compare on a set of weighted model counting benchmarks from Sang, Beame, and Kautz \shortcite{SBK05}. These 1091 benchmarks are formulas whose weighted model count corresponds to exact inference on Bayesian networks. We first evaluate numerical accuracy, since our approach uses 64-bit double precision floats: on all benchmarks that \tool{miniC2D} also finishes, the weighted model count returned by our approaches agrees within $10^{-3}$.

Results on these benchmarks are summarized in Figure \ref{fig:cachet-cactus}. \textbf{FT+Tamaki} is able to solve the most benchmarks of all tensor-based methods. Our implementations of \textbf{FT} each solve fewer benchmarks than \tool{cachet}, \tool{miniC2D}, and \tool{d4}. Nevertheless, \textbf{FT+*} are together able to solve 231 benchmarks faster than existing counters (\textbf{FT+Tamaki} is fastest on 50, \textbf{FT+Flow} is fastest on 175, and \textbf{FT+htd} is fastest on 6), including 62 benchmarks on which \tool{cachet}, \tool{miniC2D}, and \tool{d4} all time out. This significantly improves the virtual best solver (VBS) when \textbf{FT+*} are included.

\begin{figure}
	\centering
	\input{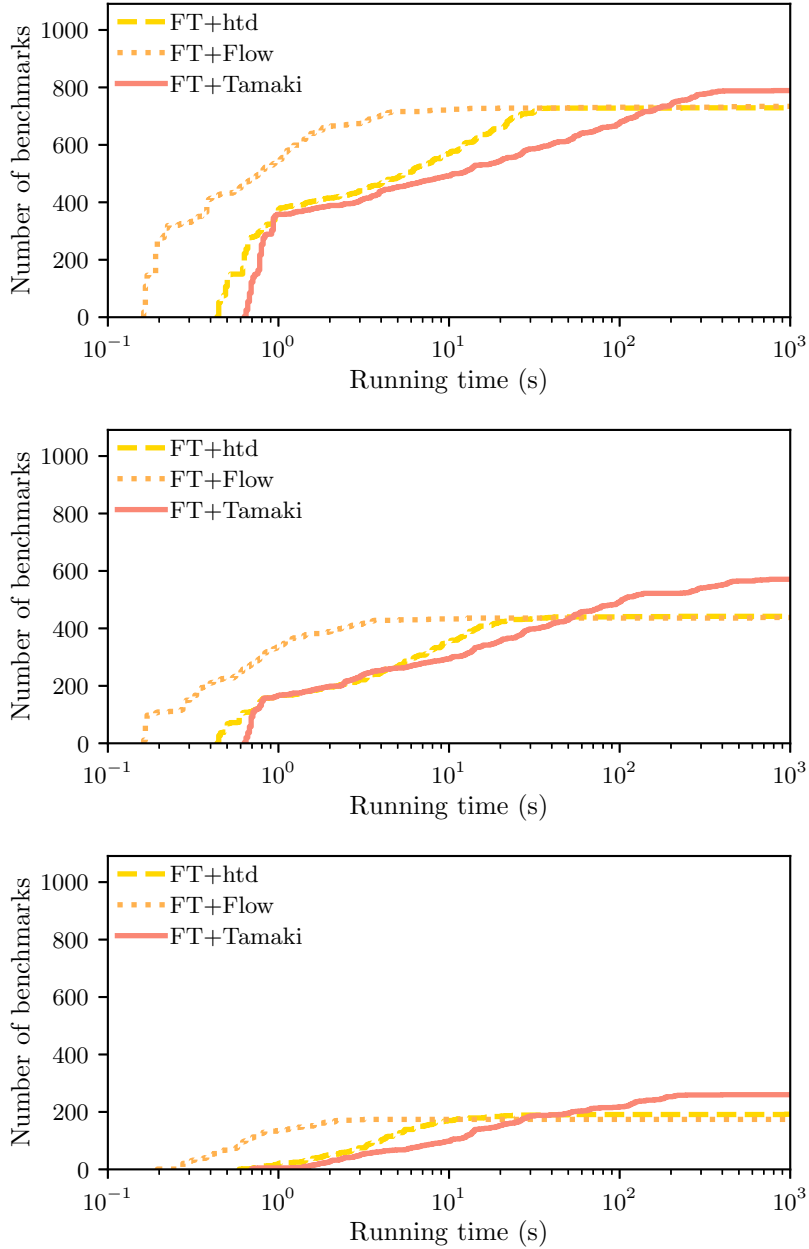}
	\caption{\label{fig:solver-analysis} The number of probabilistic-inference benchmarks (out of 1091) for which \textbf{FT+Tamaki}, \textbf{FT+Flow}, and \textbf{FT+htd} were able to find a contraction tree whose max-rank was no larger than (top) 30, (middle) 25, or (bottom) 20 within the indicated time.}
\end{figure}

The tensor-based methods (\textbf{LG}, \textbf{greedy}, \textbf{metis}, and \textbf{GN}) that do not perform factoring were only able to count a single benchmark from this set within 1000 seconds. We observe that most of these benchmarks have a variable that appears many times, which significantly hinders tensor-based methods that do not perform factoring (see Section \ref{sec:preprocessing}). 

In order to explain the relative performance of \textbf{FT+Tamaki}, \textbf{FT+Flow}, and \textbf{FT+htd} on these benchmarks, we next analyze more closely the quality of the contraction trees they produce over time. To do this, we rerun each implementation of \textbf{FT} for 1000 seconds with the contraction step (i.e. step 3 of Algorithm \ref{alg:wmc}) disabled. Each implementation of \textbf{FT} is an online solver and so produces a sequence of contraction trees over time. For each contraction tree produced on each benchmark, we record the max-rank and time of production.

Results of this experiment are summarized in Figure \ref{fig:solver-analysis}. 
\textbf{FT+Flow} is able to find more contraction trees of small max-rank within 10 seconds than the other methods, while \textbf{FT+Tamaki} is able to find more contraction trees of small max-rank within 1000 seconds than the other methods. This matches our previous observations that, among the tensor-based methods, \textbf{FT+Flow} was the fastest method on the most benchmarks while \textbf{FT+Tamaki} was able to solve the most benchmarks after 1000 seconds.


We conclude from the experiments in Section \ref{sec:experiments:cubic} and Section \ref{sec:experiments:cachet} that both \textbf{LG} and \textbf{FG} are useful as part of a portfolio of weighted model counters.

\section{Conclusions and Future Work} \label{sec:conclusion}
We presented two methods, \textbf{LG} and \textbf{FT}, for using graph decompositions to find contraction trees of small max rank of tensor networks. \textbf{LG} is a general-purpose method for finding contraction orders, while \textbf{FT} is a novel method tailored for constrained counting to handle high-rank, highly-structured tensors. We evaluated \textbf{LG} and \textbf{FT} in the context of exact weighted model counting and demonstrated that \tool{TensorOrder} is able to solve many benchmarks solved by no other exact model counter. Thus \tool{TensorOrder} is useful as part of a portfolio of weighted model counters. 

It would be interesting in the future to analyze the types of benchmarks amenable to tensor-network methods, e.g. by computing lower bounds on carving width in addition to the upper bounds given by heuristic methods. It would also be interesting to explore the impact of other preprocessing techniques (e.g., PMC \cite{LM14} or B+E \cite{LLM16}) on carving width and treewidth.

Although we restricted our experiments to a single core, a variety of libraries exist for efficiently performing tensor contractions on multiple cores or on GPUs \cite{KSTKPPRS19,NRBHHJN15}. One direction for future work is to analyze and improve the potential parallelism of tensor-based algorithms. This would allow comparison against other recent GPU-based counters, including Fichte \emph{et al.} \shortcite{FHWZ18} which also uses graph decompositions.

Tensor-based methods can also be used to count other classes of CSPs. For example, all techniques we introduced in this work would have similar performance computing the weighted model count of formulas that mix OR clauses with XOR clauses and Exactly-One clauses (as such clauses can also be represented as tree-factorable tensors). More generally, our algorithms for tensor-network contraction can be used to improve many other applications of tensor networks. Evaluating our techniques on a wider collection of tensor networks is an exciting direction for future work. 


\section*{Acknowledgment}
The authors would like to thank Mateus de Oliveira Oliveira for pointers to related work and for the alternative proof of Theorem \ref{thm:carving-equiv-tree}.

This work was supported in part by the U.S. National Science Foundation (Grants IIS-1527668, DMS-1547433, CMMI-1436845, and CMMI-1541033), by the U.S. Department of Defense (Grant W911NF-13-1-0340), and by the Big-Data Private-Cloud Research Cyberinfrastructure MRI-award funded by NSF under grant CNS-1338099,
by the Ken Kennedy Institute Computer Science \& Engineering Enhancement Fellowship funded by the Rice Oil \& Gas HPC Conference, by the Ken Kennedy Institute for Information Technology 2017/18 Cray Graduate Fellowship, 
and by Rice University.
\bibliographystyle{elsarticle-num}
\bibliography{main}

\normalsize
\newpage
\appendix

\section{A Comparison of Treewidth and Carving Width}
\label{sec:experiments:graph_analysis}

In this section, we perform an experimental comparison of treewidth and carving width of the incidence graphs of the model counting benchmarks in Section \ref{sec:experiments}.

\subsection{Experimental Setup}
On each incidence graph $G$, we ran each of the three heuristic tree decomposition solvers integrated with \tool{TensorOrder}-- \tool{Tamaki}, \tool{FlowCutter}, and \tool{htd}-- for 1000 seconds on $G$ and $\Line{G}$ and recorded the width of the best tree decomposition found amongst all tree-decomposition solvers. On each tree decomposition for $\Line{G}$ found by the solvers, we used \textbf{LG} to compute the corresponding carving decomposition of $G$ and recorded the smallest width found amongst all decompositions. Similarly, on each tree decomposition for $G$ found by the solvers, we used \textbf{FT} to compute the corresponding carving decomposition of the preprocessed graph and recorded the smallest width found amongst all decompositions. 

Unlike the experiments in Section \ref{sec:experiments}, we do not estimate the time to contract each potential contraction tree. Instead, we run each tree-decomposition solver for the full 1000 seconds on each benchmark. This allows us to more fully estimate the treewidth and carving width of these benchmarks and so more fully evaluate the potential of decomposition solvers. Each experiment was run in a high-performance cluster (Linux kernel 2.6.32) using a single 2.80 GHz core of an Intel Xeon X5660 CPU and 48 GB RAM.

\subsection{Results}
We first compare treewidth and carving width using the benchmarks from Section \ref{sec:experiments:cubic}: randomly-generated cubic graphs \cite{KCMR18}. Since \textbf{FT} only factors tensors of order 4 or higher and all vertices in each cubic graph has exactly three incident edges, \textbf{FT} performs no factoring on these graphs. Thus both \textbf{LG} and \textbf{FT} can be used to find carving decompositions.

Results on these benchmarks are summarized in Figure \ref{fig:vertex-cover-width}. We observe that, for most large graphs, the carving width of $G$ is smaller than the treewidth of $G$, which is smaller that the treewidth of $\Line{G}$. On these benchmarks, the width of the carving decompositions of $G$ found by \textbf{LG} are indeed smaller than the upper bound guaranteed by Theorem \ref{thm:carving-equiv-tree} of the width of the used tree decomposition plus one.

\begin{figure}
	\centering
	\input{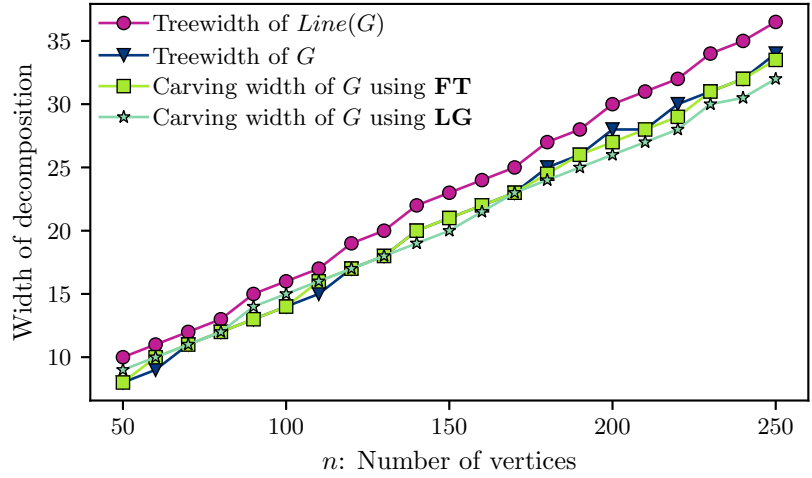}
	\caption{\label{fig:vertex-cover-width} Median of the best upper bound found for treewidth and carving width of 100 cubic graphs with $n$ vertices. Each graph-decomposition solver was run for 1000 seconds. For most large graphs, the carving width of $G$ is smaller than the treewidth of $G$, which is smaller that the treewidth of $\Line{G}$.}
\end{figure}

\begin{figure}
	\centering
	\input{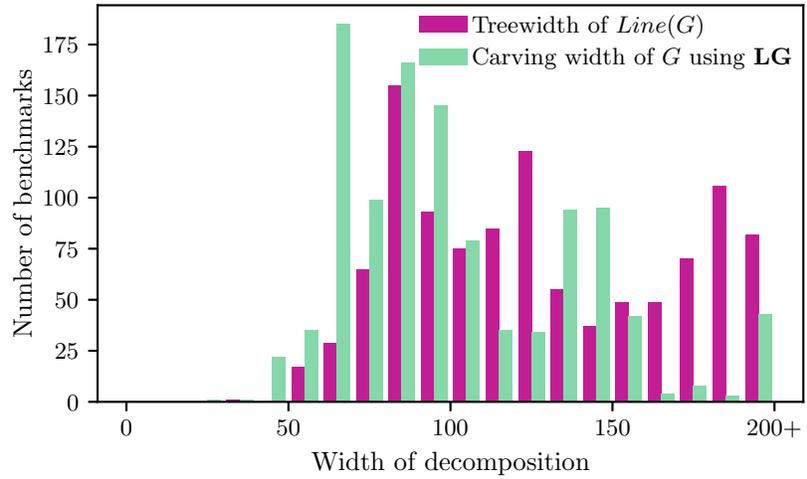}
\caption{\label{fig:wmc-width-lg} A histogram comparing an upper bound on the treewidth of $\Line{G}$ (the width of the best tree decomposition found across all solvers after 1000 seconds) with the width of the best carving decomposition of $G$ constructed by \textbf{LG}, across 1091 probabilistic inference benchmarks. On most of these benchmarks, no solver is able to find a tree decomposition of $\Line{G}$ of width smaller than 50 and so \textbf{LG} is unable to find carving decompositions of $G$ of width smaller than 50.}
\end{figure}

\begin{figure}
	\centering
	\input{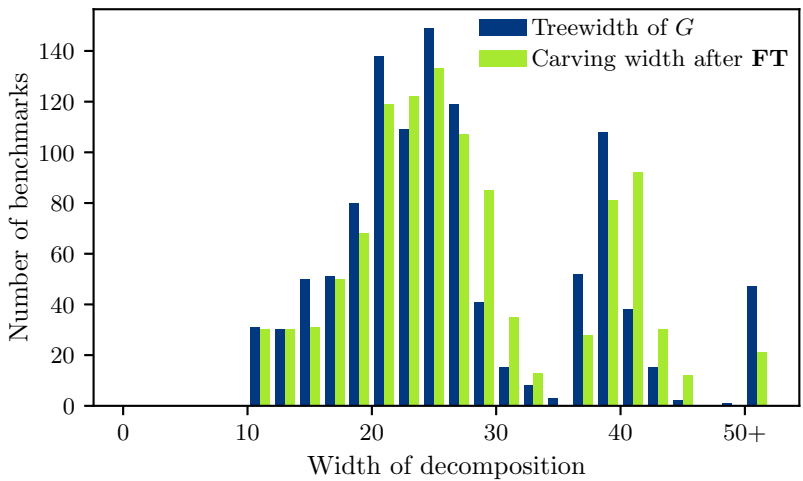}
\caption{\label{fig:wmc-width-ft} A histogram comparing an upper bound on the treewidth of $G$ (the width of the best tree decomposition found across all solvers after 1000 seconds) with the width of the best carving decomposition of preprocessed constructed by \textbf{FT}, across 1091 probabilistic inference benchmarks. On most benchmarks, the tree decompositions found of $G$ have significantly smaller width than the tree decompositions found of $\Line{G}$. Thus the carving decompositions produced by \textbf{FT} are significantly better than those produced by \textbf{LG}.}
\end{figure}

We next compare treewidth and carving width using the benchmarks from Section \ref{sec:experiments:cachet}: incidence graphs of probabilistic inference benchmarks \cite{SBK05}. Results on these benchmarks are summarized in Figure \ref{fig:wmc-width-lg} and Figure \ref{fig:wmc-width-ft}.

We observe that the carving width of $G$ found by \textbf{LG} is extremely large on these benchmarks (larger than $50$ on 1064 of the 1091 benchmarks), since these graphs have vertices of high degree. Nevertheless, the carving width of $G$ found by \textbf{LG} is still smaller than the upper bound guaranteed by Theorem \ref{thm:carving-equiv-tree} of the treewidth of $\Line{G}$ plus one on most benchmarks. We also observe that \textbf{FT} does significantly reduce the carving width by preprocessing the graph (to smaller than 50 on 1066 of the 1091 benchmarks).  Moreover, the carving width found by \textbf{FT} is smaller than the upper bound guaranteed by Theorem \ref{thm:factorable-tree} on most benchmarks.

%
\end{document}